\numberwithin{equation}{section}
\newcommand{\Var}{{\rm Var}}
\newcommand{\Cov}{{\rm Cov}}
\newcommand{\R}{\mathbb{R}}
\newcommand{\N}{\mathbb{N}}
\newcommand{\E}{\mathrm {E}}
\newtheorem{thm}{Theorem}
\newtheorem{rmk}[thm]{Remark}
\newtheorem{exa}[thm]{Example}
\newtheorem{assumption}{Assumption}
\newtheorem{defi}[thm]{Definition}
\def\AA{ \mathfrak{A} }
\def\SS{ \mathfrak{S} }
\newcommand\blfootnote[1]{%
  \begingroup
  \renewcommand\thefootnote{}\footnote{#1}%
  \addtocounter{footnote}{-1}%
  \endgroup
}
\newcommand{\vast}{\bBigg@{4}}
\newcommand{\Vast}{\bBigg@{5}}
\begin{document}
 
\title{Jointly Exchangeable Collective Risk Models: Interaction, Structure, and Limit Theorems}

\author{Daniel Gaigall$^{1,2}$, Stefan Weber$^2$}

\affil{$^1$FH Aachen - University of Applied Sciences\\
$^2$House of Insurance, Leibniz University Hannover}

\date{\today}

\maketitle

\blfootnote{E-Mail: gaigall@fh-aachen.de (Daniel Gaigall); stefan.weber@insurance.uni-hannover.de (Stefan Weber)}

\begin{abstract}

We introduce a framework for systemic risk modeling in insurance portfolios using jointly exchangeable arrays, extending classical collective risk models to account for interactions. Joint exchangeability is a more general probabilistic symmetric than de Finetti's exchangeability, characterized by the Aldous-Hoover-Kallenberg representation. We establish central limit theorems that asymptotically capture total portfolio losses, providing a theoretical foundation for approximations in large portfolios and over long time horizons. These approximations are validated through simulation-based numerical experiments. Additionally, we analyze the impact of dependence on portfolio loss distributions, with a particular focus on tail behavior.
\end{abstract}

\noindent \textit{Keywords:}   Central limit theorem; Collective risk model; Joint Exchangeability; Systemic risk; Contagion; Dependence; Insurance mathematics; Loss distributions

\section{Introduction}\label{int}

Systemic risk is a major concern in finance and insurance. Idiosyncratic and systematic risks account for only part of the probabilistic variations in outcomes, while systemic interactions play a crucial role in financial markets, cyber networks, power grids, and other infrastructures. This paper focuses on modeling insurance portfolio losses within a generalized collective model, or frequency-severity model. Classical risk theory does not inherently capture systemic effects, highlighting the well-recognized need for more sophisticated models.

We introduce a probabilistic model for total losses, such as those of an insurance company, using the theory of jointly exchangeable arrays, a weaker notion than classical exchangeability. This approach extends beyond idiosyncratic and systematic risk by incorporating a specific systemic component: network interaction. Specifically, our framework generalizes de Finetti’s exchangeability by preserving network interactions while allowing for relabeling, unlike conventional models that assume homogeneity after adjusting for covariates. Example specifications include random graphs and graphon models -- such as Erd\H{o}s-R{\'e}nyi random networks -- to describe transmissions and infections within an insurance portfolio. Additionally, conditional loss expenditures are modeled as dependent through jointly exchangeable arrays. The classical collective risk model emerges as a special case. Within this framework, we establish several limit theorems that asymptotically characterize the distribution of total portfolio losses.
 
In insurance, systemic risk naturally arises in networked systems. Relevant examples include power grids and digital infrastructures, operating both at local and global scales, as well as contagion-type dynamics generated by cyber threats. Market sentiment can likewise be modeled within this framework. Recent surveys addressing these phenomena include \cite{aw-s23} and \cite{ENISA24}. Our approach is most closely related to the systemic perspective of \cite{zs22}, while placing particular emphasis on underlying probabilistic symmetries and on the derivation of corresponding limit theorems.
 
For a systematic presentation of probabilistic symmetries, including joint exchangeability, we refer to \citet{ald} and \citet{kal3}. Under the additional assumption that jointly exchangeable arrays are dissociated, limit results in distribution for related statistics are derived in \cite{sil}, with further results in \cite{eag}. Jointly exchangeable distributions admit an ergodic decomposition that extends de Finetti's theorem, see \cite{dF37}. These structural results are developed in \citet{hoo}, \citet{ald0}, \citet{kal1}, \citet{kal2}, and \citet{kal3}. The Choquet representation of the joint distribution as a mixture of dissociated arrays provides a natural reduction to this case.

Limit theorems for exchangeable arrays have been studied from various perspectives. \citet{eag2} examines dissociated arrays and generalizations of central limit theorems to $U$-statistics. \citet{dav} extends these results to empirical processes. \citet{aus} develops a general theory of limit theorems within the framework of distributional symmetries defined by invariance under group transformations. Although this broader theory applies in our setting, we use the classical results of \citet{sil} and \citet{eag}, which are sufficient for our analysis.

Building on these theoretical foundations, we now turn to the implications for risk management and loss distribution modeling. Of particular interest are the properties of the underlying loss distribution. In our generalized collective model with interacting losses, the expectation and variance of the total loss of the insurance company can be explicitly computed. We establish asymptotic characterizations of the total loss distribution as the number of contracts or the time horizon grows. The central limit theorems we establish provide a useful tool for risk management and include Anscombe-type results; see \cite{ans}. Simulations demonstrate that these asymptotic approximations remain accurate even for moderate portfolio sizes and relatively short time horizons.

The use of probabilistic interacting particle models to capture systemic risk in economic applications dates back to \cite{foe74}. In credit risk, \cite{giwe04, giwe06} asymptotically characterize total portfolio losses under local interaction, specifically employing a voter model. In insurance, the collective model with systemic effects is particularly relevant for cyber risk and associated losses. For surveys on cyber insurance, see \cite{aw-s23}, \cite{dakr23}, \cite{el20}, \cite{ENISA24}, and \cite{kswz2024}. Closely related to our approach, \cite{zs22} study a cyber collective model incorporating covariates but without network interactions. Other forms of local and global interactions in actuarial cyber risk models are studied in \cite{fww18}, \cite{hl21}, \cite{hetal22}, \cite{bbh21}, \cite{hetal23}, and \cite{aw24}, though without addressing limit theorems under joint exchangeability.

\noindent The main contributions of this paper are
\begin{itemize}
\item[(i)] a novel framework for systemic risk modeling in insurance portfolios based on jointly exchangeable arrays, extending classical collective risk models to incorporate interactions,
\item[(ii)] asymptotic characterizations of total losses via central limit theorems, providing theoretical justification for approximations in large portfolios and over long time horizons,
\item[(iii)] simulation-based validation of the model and its approximations, demonstrating the practical relevance of the approach for risk management applications.
\end{itemize}

The structure of the paper is as follows. Section~\ref{modint} introduces the extended collective model, discusses joint exchangeability, and provides illustrative examples. Section~\ref{theory} presents limit theorems for total losses and characterizes their relation to classical collective models. All proofs are deferred to Appendix~\ref{proofs}. In Section~\ref{impsim}, we conduct case studies, validate the accuracy of approximations, and analyze the impact of dependence. Finally, Section~\ref{con} concludes with a discussion of open research questions.

\section{A collective loss model with interaction}\label{modint}

In this paper, we are interested in a frequency-severity model of insurance losses, also called collective model, which includes a special form of contagious interactions. We will first explain the type of interaction on which we focus before, we describe the full dynamic model. The underlying probability space will be denoted by $(\Omega,\AA,P)$ and assumed to be atomless.

\subsection{Contagious losses}\label{con0}

We consider a collection of agents or, in a collective model, of loss occurrences, enumerated by the natural numbers $\N$. In the latter case, unlike some common conventions, losses of zero are also admissible. Throughout the paper, the terms agent and entity will be used interchangeably. We assume that agents can trigger losses of others, a phenomenon that can be captured by a graph as illustrated in Figure \ref{fig0}. Vertices represent entities, while edges correspond to idiosyncratic infections and transmission pathways. Infections and transmissions occur according to a random mechanism.

\begin{figure}[h]
     \centering
         \includegraphics[width=0.4\linewidth]{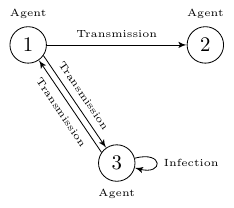}
       \caption{Agents, transmissions and infections as a directed graph.}
        \label{fig0}
\end{figure}

The random graph can be encoded by an infinite matrix (also called array) $I=(I_{i,j})_{(i,j) \in\N\times \N}$ with entries that are $\{0,1\}$-valued random variables. For $i\neq j$,  entity $j$ causes a loss at entity $i$, if $I_{i,j} = 1$; otherwise, $I_{i,j} = 0$. This definition does not require entity $j$ itself  to be infected. An infection of agent $i$ is represented by $I_{i,i} = 1$; otherwise, $I_{i,i} = 0$.
Loss expenditures resulting from infections and transmission are specified by an array $Z=(Z_{i,j})_{(i,j) \in\N\times \N}$ whose entries are  $[0,\infty)$-valued random variables. For $i,j\in\N$,  the random variable $Z_{i,j}$  captures the loss expenditure of policy holder $i$ caused either by a transmission from entity $j$ to agent $i$ (if $i\neq j$) or by an infection of the entity itself (if $i= j$).

The resulting losses are given by the array $G=(G_{i,j})_{(i,j) \in\N\times \N}$, where
\begin{equation}\label{g:def}
G_{i,j}=I_{i,j} \cdot Z_{i,j} , \quad i,j \in \N.
\end{equation}
In other words, $G_{i,j}$ represents the actual loss incurred by  entity $i$  due to a transmission from agent $j$ to $i$ (if $i\neq j$) or due to an infection of the entity itself (if $i= j$).

\subsection{Jointly exchangeable arrays}\label{exc}

In this paper, we impose structure on the random mechanisms that generate the losses. Instead of relying on the classical notion of independence, we consider weaker concepts that allow for dependence. A fundamental, classical example is exchangeability, as introduced and characterized by Bruno de Finetti. While independence corresponds to the joint distribution being a product measure, exchangeability requires the joint distribution to be invariant under all finite permutations. De Finetti’s theorem shows that exchangeability is equivalent to conditional independence, or, equivalently, to the joint distribution being a mixture of product measures. This framework aligns precisely with the setting of classical Bayesian statistics.

We introduce an even weaker probabilistic symmetry, tailored to our context. By using the collection of edges as an index set, we consider joint exchangeability, which requires invariance of the joint distribution only under a subgroup of finite permutations. This relaxation accommodates more general dependence structures than exchangeability while preserving a tractable framework for analysis. For a comprehensive exposition on this and other distributional symmetries, see \citet{kal3}.

Although we focus on arrays of real-valued random variables, many definitions and results can often be stated in a more general form for random elements in measurable spaces. To ensure the existence of regular conditional distributions and simplify the application of standard results without additional verification, we assume that these spaces are Polish and maintain this assumption throughout.

\begin{defi}
Consider random elements $X_{i,j}$, $i,j \in\N$ in the Borel space $(R,\SS)$, defined on $(\Omega,\AA,P)$. The array $X=(X_{i,j})_{(i,j) \in\N\times \N}$ is called jointly exchangeable if it has the same distribution as $(X_{\pi(i),\pi(j)})_{(i,j) \in\N\times \N}$ for all bijective $\pi:\N\to\N$. 
\end{defi}
The definition can equivalently be expressed in terms of finite permutations. As in the exchangeable case, joint exchangeability also allows for a characterization of the structure of joint distributions in terms of mixtures of ergodic measures. An expository discussion can be found in \cite{orb}, while a detailed analysis is provided in \citet{kal3}. Unlike product measures, the ergodic measures can be characterized through sampling schemes. This characterization is most naturally expressed through functional representations.

To clarify the distinction between exchangeability and joint exchangeability, we begin by considering an exchangeable array $X = (X_{i,j})_{(i,j) \in \N \times \N}$, meaning that its joint distribution remains invariant under permutations of the edges $(i,j) \in \N \times \N$. A functional representation of De Finetti's theorem states that there exist i.i.d. random variables $\xi, \xi_{i,j}$, $i,j \in \N$, uniformly distributed in $[0,1]$, along with a measurable function $f:[0,1]^2 \to R$, such that almost surely 
$$X_{i,j} = f(\xi, \xi_{i,j}), \quad (i,j) \in \N \times \N.$$ 
Conditionally on $\xi$, the random elements $X_{i,j}$, $i,j \in \N$ are independent, with conditional distribution given by the probability kernel $\nu(\xi,B) = \int_{[0,1]}  I( f(\xi,z)\in B) dz,$ $B\in \SS$. In terms of this kernel and the distribution of $\xi$, the joint distribution can be expressed as a mixture of product measures, specifically, $\int_{[0,1]} \nu(z, \cdot)^{\otimes \infty} dz $.

The jointly exchangeable case (along with related symmetries such as contractibility or separate exchangeability) imposes a weaker invariance condition on the joint distribution, leading to a more general functional representation. As in the exchangeable case, we introduce i.i.d. random variables  $\xi_{i,j}$, $i,j \in \N$, though their role will now differ slightly. Consider a jointly exchangeable array $X = (X_{i,j})_{(i,j) \in \N \times \N}$, meaning that its joint distribution remains invariant under permutations of the vertices $i \in \N$. The array $X$ is not required to be symmetric. According to the Aldous-Hoover-Kallenberg theorem (see \citet{hoo}, \citet{ald0}, \citet{kal1}, \citet{kal2}) such an array admits the following functional representation: 

There exist i.i.d. random variables $\xi,\xi_i,\xi_{i,j}$, $i\leq j$, all uniformly distributed in $[0,1]$, along with a measurable function $h:[0,1]^4 \to R$, such that almost surely
$$X_{i,j} = h(\xi, \xi_{i}, \xi_{j}, \xi_{i,j}), \quad (i,j) \in \N \times \N,$$
where we set $\xi_{j,i} : = \xi_{i,j}$ for $i\leq j$. The latter symmetry assumption is an essential part of the functional representation, which characterizes the structure of the corresponding joint distribution through a sampling scheme. Conversely, any array that admits an Aldous-Hoover-Kallenberg representation is jointly exchangeable.

The distribution of the array can be decomposed as a mixture. To this end, we consider the conditional distribution of $X$ given $\xi$, represented by a kernel $\overline \nu$ from $[0,1]$ to $R^{\N\times\N}$. The original joint distribution is then characterized by the Choquet representation  $\int_{[0,1]} \overline \nu (z) dz$. For each $z\in[0,1]$, $\overline \nu (z)$ is the distribution of the array  $(h(z,  \xi_{i}, \xi_{j}, \xi_{i,j}))_{(i,j) \in\N\times \N}$, corresponding to the ergodic extremal points. These arrays are precisely those that are dissociated, a concept that we recall in the next definition.
\begin{defi}
Consider an array of random elements $X=(X_{i,j})_{(i,j) \in\N\times \N}$ taking values in $(R,\SS)$. We say that $X$ is dissociated if the subarrays $(X_{i,j})_{(i,j) \in A\times A}$ and $(X_{i,j})_{(i,j) \in B\times B}$ are independent whenever $A,B\subseteq\N$ and $A\cap B=\emptyset$.
\end{defi}

\begin{rmk}\label{graphon}

A special case of jointly exchangeable arrays is given by infinite graphs (see \cite{orb}). In this case, we set $R = \{0,1\}$. Focusing only on the case of undirected graphs, we obtain a functional representation by 
\begin{align*}
 X_{i,j}=I(\xi_{i,j}\le W(\xi,\xi_i,\xi_j)),~i,j\in\N,
\end{align*} 
where $W:[0,1]^3\rightarrow[0,1]$  is a measurable function that is symmetric in its last two arguments. In the dissociated case, there is no dependence on $\xi$, and the corresponding function is of the form $W:[0,1]^2\rightarrow[0,1]$. This function is often referred to as a graphon.
\end{rmk}

\subsection{Mixtures and limit theorems}\label{rmkdis}

For sums of jointly exchangeable arrays,
\begin{align*}
U(n)=\sum_{1\le i< j\le n} X_{i,j},
\end{align*}
where $X = (X_{i,j})_{(i,j) \in \N \times \N}$ takes values in the real numbers and $n \in \N$, limit results as  $n \to \infty$  in distribution are available in \cite{sil} and \cite{eag}. Limit theorems for the jointly exchangeable case follow directly from those for the dissociated case. Consider the functional representation
$X_{i,j} = h(\xi, \xi_{i}, \xi_{j}, \xi_{i,j})$, and define 
$
\widetilde X_{i,j}(z)=h(z,\xi_i,\xi_j,\xi_{i,j}),
$
 for $i,j\in\N$, with joint distribution $\bar \nu (z)$. Then, setting  $\widetilde U(n,z)=\sum_{1\le i< j\le n} \widetilde X_{i,j}(z)$, we obtain
\begin{align*}
P(U(n) \le x)=\int_{[0,1]}  P(\widetilde U(n,z)\le x)d z.
\end{align*}
This shows that distributional properties of sums related to jointly exchangeable arrays can be derived from those of jointly exchangeable and dissociated arrays via integration or mixture distributions.  The same applies to limit results as $n\to \infty$ in distribution, provided the dominated convergence theorem is used as an additional tool.

To simplify the presentation of our results, we introduce two assumptions that will be used repeatedly.

\begin{assumption}\label{a1}
The array of random variables $G$ is jointly exchangeable and satisfies $\E(G_{i,j}^2) < \infty$ for all $i,j \in \N$.
\end{assumption}

\begin{assumption}\label{a2}
The array of random variables $G$ is dissociated.
\end{assumption}

\subsection{Examples}\label{exa}

Due to the Aldous-Hoover-Kallenberg representation, constructing examples is straightforward. According to equation \eqref{g:def}, the array $G$  is determined by infections and transmissions $I$ and by conditional expenditures $Z$. If $I$  and $Z$ are independent of each other and satisfy either joint exchangeability, dissociatedness, or both, then $G$ inherits the respective property. We provide specifications for $I$ in Example~\ref{exa1} and for $Z$ in Example~\ref{exa2}; these arrays are jointly separable and dissociated. The functional representation naturally extends to the non-dissociated case by introducing a common factor variable. As discussed, the general joint distribution can be decomposed into ergodic components -- i.e., dissociated distributions -- via a Choquet representation.

\begin{exa}\label{exa1}

a) Standard case: Infections (in the absence of transmissions) are  
\[
I \; = \; 
\begin{pmatrix}
J_{1} & 0 & 0 & \dots\\
0 & J_{2} & 0 & \dots\\
0 & 0 & J_{3} &  \dots \\
\vdots & \vdots & \vdots & \ddots
\end{pmatrix},
\]
where \( J_{i} \), \( i \in \N \), are independent and identically distributed \(\{0,1\}\)-valued random variables with \( P(J_{i} = 1) = p_J \) for some \( p_J \in [0,1] \). This example includes only infections but no transmissions. It corresponds to a classical insurance loss model. An illustration is provided in Figure~\ref{exfiga}.

b) Erd\H{o}s-R{\'e}nyi model: Transmissions (in the absence of infections) are modeled by 
\begin{align*}
I \; = \; \begin{pmatrix}
0 & K_{1}  & K_{2}  & \dots\\
K_{1} & 0  & K_{3}  & \dots\\
K_{2} & K_{3}  &0  & \dots\\
\vdots & \vdots & \vdots & \ddots
\end{pmatrix}.
\end{align*}
Here, the  $\{0,1\}$-valued random variables $K_{i}$, $i\in\N$, are independent and identically distributed with $P(K_{i}=1)=p_K$ for some $p_K\in[0,1]$. This example includes only symmetric transmissions, corresponding to an undirected random graph: the Erd\H{o}s-R{\'e}nyi model.  An illustration is provided in Figure~\ref{exfigb}.

c) Countermonotonic Erd\H{o}s-R{\'e}nyi model: This directed random graph is constructed from the undirected Erd\H{o}s-R{\'e}nyi model by tossing independent fair coins to assign a direction to each edge. Transmissions are given by 
\begin{align*}
I\; = \; \begin{pmatrix}
0 &J_{1} K_{1}  & J_{2}K_{2}  & \dots\\
(1-J_{1})K_{1} & 0  & J_{3}K_{3}  & \dots\\
(1-J_{2})K_{2} & (1-J_{3})K_{3}  &0  & \dots\\
\vdots & \vdots & \vdots & \ddots
\end{pmatrix}.
\end{align*}
Here, $J_{i}$, $K_{i}$, $i\in\N$, are independent  $\{0,1\}$-valued random variables. As before, $K_{i}$, $i\in\N$,   are identically distributed with  $P(K_{i}=1)=p_K$ for some $p_K\in[0,1]$. The ``fair coins''  $J_{i}$, $i\in\N$, are identically distributed with  $P(J_{i}=1)=1/2$. An illustration is presented in Figure~\ref{exfigc}. 

d) Erd\H{o}s-R{\'e}nyi model with infections: 
This case combines the standard case of infections with undirected Erd\H{o}s-R{\'e}nyi  transmissions, resulting in the array
\begin{align*}
I\;= \; \begin{pmatrix}
J_{1} & K_{1}  & K_{2}  & \dots\\
K_{1} & J_{2}  & K_{3}  & \dots\\
K_{2} & K_{3}  &J_{3}  & \dots\\
\vdots & \vdots & \vdots & \ddots
\end{pmatrix}.
\end{align*}
The independent $J_{i}$ and $K_{i}$, $i\in\N$, are $\{0,1\}$-valued with $P(J_{i}=1)=p_J$ for some $p_J\in[0,1]$ and  $P(K_{i}=1)=p_K$ for some $p_K\in[0,1]$. An illustration is provided in Figure~\ref{exfigd}.

e) Contagion model: In this case, entities can be infected idiosyncratically. Additionally, transmissions from one entity to another can independently occur, if the former is infected. This is captured by the array 
\begin{align*}
I \;= \; \begin{pmatrix}
J_{1} & J_{2}K_{1,2}  & J_{3}K_{1,3}  & \dots\\
J_{1}K_{2,1} & J_{2}  & J_{3}K_{2,3}  & \dots\\
J_{1}K_{3,1} & J_{2}K_{3,2}  & J_{3}  & \dots\\
\vdots & \vdots & \vdots & \ddots
\end{pmatrix}.
\end{align*}
The independent $J_{i}$ and $K_{i,j}$, $i,j\in\N$,  $i\neq j$, are $\{0,1\}$-valued with $P(J_{i}=1)=p_J$ for some $p_J\in[0,1]$ and  $P(K_{i,j}=1)=p_K$ for some $p_K\in[0,1]$. An illustration is provided in Figure~\ref{exfige}.

\begin{figure}[h]
     \centering
     \begin{subfigure}[b]{0.35\textwidth}
         \centering
         \includegraphics[width=1\linewidth]{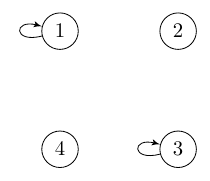}
         \caption{Standard case}
         \label{exfiga}
     \end{subfigure}%
\hspace{2cm}
     \begin{subfigure}[b]{0.3\textwidth}
         \centering
         \includegraphics[width=1\linewidth]{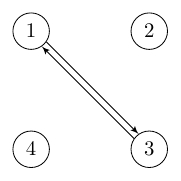}
         \caption{Erd\H{o}s-R{\'e}nyi model}
         \label{exfigb}
     \end{subfigure}

\hspace{0.7cm}
    \begin{subfigure}[b]{0.3\textwidth}
         \centering
    \includegraphics[width=1\linewidth]{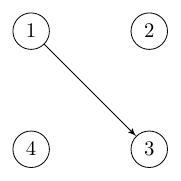}
         \caption{Countermonotonicity}
         \label{exfigc}
     \end{subfigure}%
\hspace{2cm}
     \begin{subfigure}[b]{0.3\textwidth}
         \centering
         \includegraphics[width=1\linewidth]{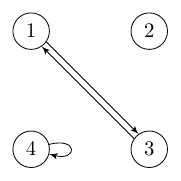}
         \caption{Erd\H{o}s-R{\'e}nyi with infections}
         \label{exfigd}
     \end{subfigure}
    \begin{subfigure}[b]{0.35\textwidth}
         \centering
    \includegraphics[width=1\linewidth]{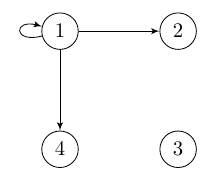}
         \caption{Contagion model}
         \label{exfige}
     \end{subfigure}%
       \caption{Transmissions and infections in the model specifications in Example \ref{exa1}.}
        \label{fig1}
\end{figure}
\end{exa}

Next, we consider specifications of conditional loss expenditures, focusing again on jointly exchangeable and dissociated examples.

\begin{exa}\label{exa2}
a) Independent conditional losses: If the loss expenditures $Z_{i,j}$, $i,j\in\N$, are independent and identically distributed, the model closely resembles the classical collective risk model. However, systemic risk typically requires incorporating dependencies, generalizing independence. This can be achieved, for example, by imposing only joint exchangeability, based on the Aldous-Hoover-Kallenberg representation.

b) Comonotonic conditional loss transmission: The random variable $Z_{i,j}$ represents the conditional loss expenditure from entity $j$ to $i$. A specific type of contagion can be modeled by assuming that these losses depend only on their origin, i.e., for each $i$, the losses $Z_{i,j}$ are comonotonic. This means that there exists positive random variables $\widetilde Z_j$, $j\in\N$, assumed to be i.i.d., and an increasing, measurable function $h:[0,\infty) \to [0,\infty)$ such that $Z_{i,j} = h(\widetilde Z_j)$. In this simplest case $h(x) = x$ for all $x$, so that the conditional loss expenditures satisfy $Z_{i,j} = \widetilde Z_j$ for all $i,j \in \N$.

c) Positively dependent conditional loss expenditure transmissions: More generally, we consider a measurable function $h: [0,\infty) \times [0,1] \to [0,\infty)$ that is increasing in the first argument. Conditional losses are given by
\begin{align*}
Z_{i,j}=h(\widetilde Z_{j},\varepsilon_{i,j}),~i,j\in\N,
\end{align*}
where the i.i.d. and  positive random variables $\widetilde Z_{j}$, $j\in\N$, model the contagious component of the loss expenditures, while the independent shocks $\varepsilon_{i,j}$, $i,j\in\N$, uniformly distributed in $[0,1]$, capture the idiosyncratic properties of loss expenditure transmissions. Of course, in specific models, one could alternatively choose different functions $h$ and distributions for $\varepsilon_{i,j}$, where appropriate.
\end{exa}

\subsection{The number of loss events}\label{number}

To model the number of loss events in a dynamic insurance portfolio, we introduce a counting process $N=(N(t))_{t\in [0,\infty)}$, i.e, a stochastic process indexed by $[0,\infty)$ with values in $\N_0$, satisfying $N(0)=0$ and having non-decreasing sample paths. For a given time $t\in [0,\infty)$, $N(t)$ represents the number of loss events up to time $t$.  For technical reasons, we state the following moment assumption.
\begin{assumption}\label{a3}
The counting process $N$ satisfies $\E(N(t)^2)<\infty$ for all $t\in[0,\infty)$.
\end{assumption}
Various concrete specifications of $N$ are possible.

\begin{exa}
a) The simplest case is a constant number of loss events, i.e., 
$$N(t)\equiv m~\text{for some}~m\in\N_0.$$ 

b)  A common example is a homogeneous Poisson process with intensity $\lambda\in(0,\infty)$ that satisfies 
$N(t)\sim{\rm Poi}(\lambda t)$, $t\in[0,\infty),$
i.e., the number of loss events at time $t$ follows a Poisson distribution with parameter $\lambda t$. In this case, we have
$
\frac{N(t)}{t}\overset{a.s.}{\longrightarrow} \lambda$,
and, in particular,
$N(t)\overset{a.s.}{\longrightarrow} \infty$ as $t\to\infty$.

c) More generally, we may consider Cox processes. To construct these, we let $T$ be a random time change, i.e., a stochastic process on $[0,\infty )$ with strictly increasing sample paths satisfying $\lim_{t\to\infty}T(t)=\infty$. Let $\widetilde N$ be an independent homogeneous Poisson process with rate $\lambda\in(0,\infty)$. The Cox process is then given by $N(t)=\widetilde N(T(t))$, $t\in[0,\infty)$. The special case of an inhomogeneous Poisson process corresponds to the situation where $T$ is deterministic and absolutely continuous with respect to Lebesgue measure.
\end{exa}

\subsection{The  loss of the insurance company}\label{total}

To specify our full model, we introduce dynamic losses, extending equation \eqref{g:def}.
\begin{assumption}\label{a4}
The arrays $(G,I, Z)$ satisfy condition \eqref{g:def}. For $k\in \N$, the tuples $(G_k,I_k,Z_k)$, $k\in \N$, are independent copies of  $(G,I, Z)$. Moreover, the counting process $N$ introduced in Section~\ref{number}, the array $(G,I, Z)$, and the sequence of array tuples $(G_k,I_k,Z_k)_{k\in \N}$ are jointly independent.
\end{assumption}
To fix the notation, we adopt the convention that the loss event counter $k$ is the last subscript, i.e., $G_{i,j,k}=I_{i,j,k} \cdot Z_{i,j,k}$, $i, j \in \N$.

We denote the loss incurred by entity $i$ up to time $t$, caused either by transmissions from entity $j$ (if $i\neq j$) or by infections (if $i= j$),  by 
\begin{align*}
X_{i,j}(t)=\sum_{k=1}^{N(t)}G_{i,j,k}.
\end{align*} 
Considering an insurance portfolio of entities indexed by $\{1,2, \dots, n\}$, where $n\in \N$, the total loss experienced by entity $i=1,\dots,n$ up to time $t$ is
\begin{align*}
Y_i(n,t)=\sum_{j=1}^n X_{i,j}(t).
\end{align*} 

Finally, the total loss of the insurance company up to time $t$  is given by the sum of the individual losses in the insurance portfolio:
\begin{align*}
S(n,t)=\sum_{i=1}^n  Y_i(n,t).
\end{align*}
Alternatively, this can be expressed as 
\begin{align*}
S(n,t)=\sum_{k=1}^{N(t)}L_{k}(n),
\end{align*}
where  $L_{k}(n)=\sum_{i,j=1}^n G_{i,j,k}$ denotes the total loss incurred during loss event $k\in\N$.

\section{Limit theorems}\label{theory}

In this section, we derive several limit theorems for large portfolios, long-term horizons, and their combination. To be more specific, Section~\ref{exp} provides preliminaries: the computations of the expectation and variance of the insurance company's loss. Section~\ref{n incr} establishes central limit theorems for the total loss $S(n,t)$ as the number of insurance contracts $n \to \infty$ increases, while keeping the time point $t \in [0,\infty)$ fixed. Section~\ref{n t incr} extends these results to large-portfolio limits over long time horizons. Section~\ref{t incr} presents central limit theorems for the total loss $S(n,t)$ when considering fixed portfolio sizes over extended time horizons. Finally, in Section~\ref{rel}, we discuss the relation of our model to the classical collective risk model.

\subsection{Expectation and variance}\label{exp}

Applying Wald's equations, we derive formulas for the expectation and variance of the insurance company's total loss. These results will be used in the characterization of asymptotic distributions in the limit theorems that we prove next.

\begin{thm}\label{thm1}
Suppose that Assumptions \ref{a1}--\ref{a4} hold.

a) The expected total loss satisfies 
\begin{align*}
\mu_S(n,t)=\E\big(S(n,t)\big)=\mu_N(t) \mu_L(n),
\end{align*}
where $\mu_N(t)=E(N(t))$, $ \mu_L(n)=\E\big(L_{1}(n)\big)$ and
$$
\E\big(L_{1}(n)\big)=n\E(G_{1,1})+n(n-1)\E(G_{1,2}).
$$

b) The variance of the total loss is given by  
\begin{align*}
\sigma^2_S(n,t)=\Var\big(S(n,t)\big)=\sigma^2_N(t)\mu_L(n)^2+\mu_N(t)\sigma^2_L(n),
\end{align*}
where $\sigma^2_N(t)=\Var(N(t))$ and $\sigma^2_L(n)=\Var\big(L_1(n)\big)$ with
\begin{align*}
\Var\big(L_1(n)\big)=&n\Var(G_{1,1})\\
+&n(n-1)\\
\times&\big(2\Cov(G_{1,1},G_{1,2})+2\Cov(G_{1,1},G_{2,1})+\Var(G_{1,2})+\Cov(G_{1,2},G_{2,1})\big)\\
+&n(n-1)(n-2)\\
\times&\big(\Cov(G_{1,2},  G_{1,3})+\Cov(G_{1,2},  G_{3,1} )+\Cov(G_{2,1},  G_{1,3})+\Cov(G_{2,1}, G_{3,1} )\big).
\end{align*}
\end{thm}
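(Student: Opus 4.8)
The plan is to compute the expectation and variance by exploiting two layers of structure: the compound (random-sum) representation $S(n,t)=\sum_{k=1}^{N(t)}L_k(n)$ with $N$ independent of the i.i.d.\ sequence $(L_k(n))_{k\in\N}$, and the joint exchangeability of the array $G$, which collapses the many covariance terms into a few representative ones. For part (a), I would first apply Wald's identity for the compound sum: since the $L_k(n)$ are i.i.d.\ with common mean $\mu_L(n)=\E(L_1(n))$, finite by Assumption~\ref{a1}, and $N(t)$ is an independent integrable counting variable (Assumption~\ref{a3}), one gets $\E(S(n,t))=\E(N(t))\,\E(L_1(n))$. Then I would expand $L_1(n)=\sum_{i,j=1}^n G_{i,j,1}$ by linearity and split the double sum into the $n$ diagonal terms $G_{i,i,1}$ and the $n(n-1)$ off-diagonal terms $G_{i,j,1}$ with $i\neq j$. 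Joint exchangeability forces $\E(G_{i,i,1})=\E(G_{1,1})$ for every $i$ and $\E(G_{i,j,1})=\E(G_{1,2})$ for every ordered pair $i\neq j$ (apply a bijection $\pi$ of $\N$ sending $(i,i)\mapsto(1,1)$, resp.\ $(i,j)\mapsto(1,2)$), which yields the stated formula for $\E(L_1(n))$.

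For part (b), the variance of the compound sum follows from the standard conditional-variance decomposition (the ``Wald variance'' formula): conditioning on $N(t)$ and using independence of $N$ from $(L_k(n))_k$,
\begin{align*}
\Var(S(n,t))=\E(N(t))\,\Var(L_1(n))+\Var(N(t))\,\E(L_1(n))^2,
\end{align*}
which is exactly the claimed identity $\sigma_S^2(n,t)=\sigma_N^2(t)\mu_L(n)^2+\mu_N(t)\sigma_L^2(n)$; here $\Var(L_1(n))$ is finite because $\E(G_{i,j}^2)<\infty$ for all $i,j$ and $L_1(n)$ is a finite sum. The substantive work is then computing $\Var(L_1(n))=\sum_{(i,j)}\sum_{(i',j')}\Cov(G_{i,j,1},G_{i',j',1})$. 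I would organize the $n^2\times n^2$ covariance terms by the ``pattern'' of coincidences among the four indices $i,j,i',j'$: which of the four are equal to which. Under joint exchangeability, $\Cov(G_{i,j},G_{i',j'})$ depends only on this coincidence pattern (any two index-quadruples with the same pattern are related by a bijection of $\N$), so I only need (i) the list of admissible patterns, (ii) a representative covariance for each, and (iii) the number of quadruples realizing each pattern as a function of $n$.

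The bookkeeping is the main obstacle, so I would handle it systematically. The diagonal$\times$diagonal case $i=j$, $i'=j'$ splits into $i=i'$ (giving $n$ terms, each $\Var(G_{1,1})$) and $i\neq i'$; the latter, together with the symmetric diagonal$\times$off-diagonal and off-diagonal$\times$off-diagonal patterns, should be grouped by how many distinct vertices appear among $\{i,j,i',j'\}$. Patterns involving exactly two distinct vertices contribute the $n(n-1)$ block: these are $\Cov(G_{1,1},G_{1,2})$, $\Cov(G_{1,1},G_{2,1})$, $\Var(G_{1,2})$, and $\Cov(G_{1,2},G_{2,1})$, each appearing with the stated multiplicities (the factor $2$ on the first two covariances coming from the symmetric roles of $(i,j)$ and $(i',j')$). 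Patterns involving exactly three distinct vertices contribute the $n(n-1)(n-2)$ block, with representatives $\Cov(G_{1,2},G_{1,3})$, $\Cov(G_{1,2},G_{3,1})$, $\Cov(G_{2,1},G_{1,3})$, $\Cov(G_{2,1},G_{3,1})$. I would also check that patterns with four distinct vertices contribute nothing new here --- they do appear in $\Var(L_1(n))$ as $\Cov(G_{1,2},G_{3,4})$-type terms, so I need to confirm the claimed formula already accounts for them or explain why they vanish; if the array is additionally assumed dissociated (Assumption~\ref{a2}) such terms are zero, but since Theorem~\ref{thm1} is stated only under Assumptions~\ref{a1}--\ref{a4}, I would either invoke that the four-distinct-vertex covariances are in fact zero by the Aldous--Hoover structure, or reconcile the count carefully. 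Modulo this point, the proof reduces to collecting coefficients, and I would present it as a table of (pattern, representative covariance, multiplicity) followed by summation.
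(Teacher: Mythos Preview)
Your approach is essentially the same as the paper's: Wald's identities for the compound sum, followed by a covariance expansion of $\Var(L_1(n))$ organized by coincidence patterns of the index quadruple, with joint exchangeability reducing each pattern to a single representative. The only point to clear up is your hesitation about the four-distinct-vertex terms: Assumption~\ref{a2} \emph{is} the dissociation hypothesis and it \emph{is} among Assumptions~\ref{a1}--\ref{a4}, so $\Cov(G_{1,2},G_{3,4})=0$, and likewise $\Cov(G_{1,1},G_{2,2})=0$ and $\Cov(G_{1,1},G_{2,3})=0$ (disjoint vertex sets), which is exactly why those patterns are absent from the stated formula; once you invoke this, your table-of-patterns argument goes through and matches the paper's computation.
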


\subsection{Central limit theorems  as the number of insurance contracts increases}\label{n incr}

Fixing the time horizon $t$, we first establish a central limit theorem for the large portfolio limit. The total loss of the  insurance company up to time $t$ can be expressed as $S(n,t)=\sum_{i,j=1}^nX_{i,j}(t)$
where the array of real-valued random variables $X(t)=(X_{i,j}(t))_{(i,j) \in\N\times \N}$ satisfies $X_{i,j}(t)= \sum_{k=1}^{N(t)}G_{i,j,k}$
for $i,j\in\N$. 
Introducing the notation $
X_{i,j}'(m)= \sum_{k=1}^{m}G_{i,j,k}
$, $i, j \in \N$, we can equivalently write $S'(n,N(t))=S(n,t)$ and $X_{i,j}'(N(t))=X_{i,j}(t)$ in what follows. 

For illustration, suppose that $N(t)\equiv m$ for some $m\in\N_0$. In this case, Assumptions \ref{a1}--\ref{a4} imply that the array of real-valued random variables $X_{i,j}'(m)$, $i, j \in \N$, is jointly exchangeable and dissociated. Consequently, statistical results for $S'(n,m)$ are available. We use this observation and apply the limit results in \cite{sil} to $S'(n,m)$ by conditioning on the event $\{N(t)=m\}$ for  $m\in\N_0$, leading to the following statement. 

\begin{thm}\label{thm2}
Suppose that Assumptions \ref{a1}--\ref{a4} hold and that $\sigma_L^2(n)>0$ for $n$ sufficiently large.

a)  For any $m\in\N_0$, we have
\begin{align*}
\frac{ S'(n,m)-m\mu_L(n)}{\sigma_L(n)} \overset{d}{\longrightarrow}\eta(m) ~\text{as}~n\to\infty,
\end{align*}
where  the real-valued random variable $\eta(m)$ has the distribution function 
\begin{align*}
P(\eta(m)\le x)=\Phi_{0,m}(x)
\end{align*}
for $x \in \mathbb{R}$, with $\Phi_{\mu,\sigma^2}$ denoting the distribution function of $N(\mu,\sigma^2)$, $\mu \in \mathbb{R}$, $\sigma^2 \in (0,\infty)$, and with $\mu_L(n)$ and $\sigma_L(n)$ as defined in Theorem~\ref{thm1}.

b)  The total portfolio loss  $S(n,t)=S'(n,N(t))$ converges in distribution as $n\to\infty$:
\begin{align*}
\frac{ S'(n,N(t))-N(t)\mu_L(n)}{\sigma_L(n)} \overset{d}{\longrightarrow}\eta ~\text{as}~n\to\infty,
\end{align*}
where  the real-valued random variable $\eta$ has the distribution function 
\begin{align*}
P(\eta\le x)=\sum_{m=0}^\infty P(N(t)=m)\Phi_{0,m}(x)
\end{align*}
for $x \in \mathbb{R}$.
\end{thm}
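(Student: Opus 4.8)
The plan is to prove part~a) for each fixed number $m$ of loss events first, and then deduce part~b) by conditioning on $\{N(t)=m\}$ and mixing the resulting conditional limits against the law of $N(t)$.

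For part~a), recall from Assumption~\ref{a4} that $S'(n,m)=\sum_{k=1}^{m}L_k(n)$ with $L_k(n)=\sum_{i,j=1}^{n}G_{i,j,k}$, where $L_1(n),\dots,L_m(n)$ are i.i.d.\ copies of $L_1(n)=\sum_{i,j=1}^{n}G_{i,j}$. The core of the argument is the single--array central limit theorem
\[
\frac{L_1(n)-\mu_L(n)}{\sigma_L(n)}\ \convD\ N(0,1)\qquad(n\to\infty).
\]
To obtain it I would write $L_1(n)=\sum_{i=1}^{n}G_{i,i}+\sum_{1\le i<j\le n}\big(G_{i,j}+G_{j,i}\big)$; by Assumption~\ref{a2} the diagonal terms $G_{i,i}$ are i.i.d.\ (and square--integrable by Assumption~\ref{a1}), while $\big(G_{i,j}+G_{j,i}\big)_{1\le i<j}$ is again a jointly exchangeable and dissociated array, so the distributional limit theorems for sums of such arrays in \cite{sil} and \cite{eag} apply, and combining the two pieces and dividing by the exact standard deviation $\sigma_L(n)$ produces the standard normal limit. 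Granting this, write
\[
\frac{S'(n,m)-m\mu_L(n)}{\sigma_L(n)}=\sum_{k=1}^{m}\frac{L_k(n)-\mu_L(n)}{\sigma_L(n)};
\]
since the $m$ summands are independent with common marginal limit $N(0,1)$, the vector of normalized terms converges jointly to $m$ independent $N(0,1)$ variables, and the continuous mapping theorem applied to summation yields convergence to $N(0,m)$, i.e.\ to $\eta(m)$. The case $m=0$ is immediate, since then $S'(n,0)\equiv0$ and the normalized quantity is identically $0$.

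For part~b), use that by Assumption~\ref{a4} the counting process $N$ is independent of $(G_k,I_k,Z_k)_{k\in\N}$, so that on $\{N(t)=m\}$ the variable $S(n,t)=S'(n,N(t))$ has the unconditional distribution of $S'(n,m)$. Conditioning on the value of $N(t)$ therefore gives, for every $x\in\R$,
\[
P\!\left(\frac{S'(n,N(t))-N(t)\mu_L(n)}{\sigma_L(n)}\le x\right)=\sum_{m=0}^{\infty}P(N(t)=m)\,P\!\left(\frac{S'(n,m)-m\mu_L(n)}{\sigma_L(n)}\le x\right).
\]
By part~a) each inner probability converges to $\Phi_{0,m}(x)$ as $n\to\infty$ (for all $x$ when $m\ge1$; for $x\ne0$ when $m=0$). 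The summands are bounded by the summable sequence $P(N(t)=m)$, so dominated convergence lets me interchange limit and sum, giving convergence of the left--hand side to $\sum_{m=0}^{\infty}P(N(t)=m)\Phi_{0,m}(x)$ at every continuity point of this limiting distribution function (its only possible atom being $0$, and only when $P(N(t)=0)>0$). This is precisely $\frac{S(n,t)-N(t)\mu_L(n)}{\sigma_L(n)}\convD\eta$.

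The main obstacle is the single--array central limit theorem used in part~a): one must verify that Assumptions~\ref{a1}--\ref{a4} place us in the regime where the results of \cite{sil} and \cite{eag} produce a \emph{Gaussian} limit (rather than a non-Gaussian mixture), and control how the i.i.d.\ diagonal contribution and the off-diagonal dissociated sum combine under normalization by the exact standard deviation $\sigma_L(n)$. Once that CLT is in hand, the passage to $m$ loss events and the mixing over the distribution of $N(t)$ are routine.
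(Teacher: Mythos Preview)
Your proposal is correct and follows essentially the same route as the paper: split $L_1(n)$ (or $S'(n,m)$) into the i.i.d.\ diagonal sum and the symmetrized off-diagonal sum, invoke Silverman's CLT for dissociated arrays on the latter, show the diagonal is negligible at the relevant rate, and then pass to random $N(t)$ by conditioning and dominated convergence. The only organizational difference is that you first prove the $m=1$ CLT and extend to general $m$ via independence of the $L_k(n)$, whereas the paper applies Silverman directly to the summed array $X'(m)=\sum_{k=1}^m G_k$ and obtains $N(0,m)$ in one step; the paper also spells out the two points you flag as the main obstacle, namely $\sigma_L(n)\sim \tau\, n^{3/2}$ (from the variance formula in Theorem~\ref{thm1}) and $R(n,m)/n\to 0$ a.s.\ by the SLLN, which together with Slutsky close the argument.
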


\begin{rmk}\label{rmk2}
The previous result can be directly applied in practical settings to approximate the distribution of the total loss $S(n,t)$ by a mixture of normal distributions for sufficiently large $n$. Suppose that the hypotheses of Theorem~\ref{thm2} are satisfied. Then, for $x \in \mathbb{R}$ and sufficiently large $n$, the distribution function of the total loss is approximately given by
\begin{align*}
P\big (S'(n,N(t))\leq x\big )\approx\sum_{m=0}^\infty P(N(t)=m)\Phi_{m\mu_L(n),m\sigma_L^2(n)}(x).
\end{align*}
\end{rmk}

\subsection{Central limit theorems as the number of insurance contracts and the point in time increase}\label{n t incr}

A modified version of Theorem~\ref{thm2} also holds in the case where time increases simultaneously with the number of contracts. Dividing both sides of Theorem~\ref{thm2} a) by $\sqrt{m}$, we obtain convergence to a standard normal distribution as $n \to \infty$ for all $m$. This suggests that an extension to the joint limit $m, n \to \infty$ might seem trivial. However, while the limit theorem for each fixed $m$ ensures convergence as $n \to \infty$, the simultaneous limit introduces additional challenges. In particular, uniformity in $m$ is not automatic, and without appropriate control, the dependence of $n$ on $m$ could influence the limiting behavior. Nevertheless, in our setting, we show that these issues do not arise, thereby establishing the validity of the joint limit.

To this end, we recall the expression for the total loss of the insurance company,
$S(n,t) = \sum_{k=1}^{N(t)} L_k(n)$, where our assumptions ensure that the random variables $L_k(n) = \sum_{i,j=1}^n G_{i,j,k},$ $k \in \mathbb{N}$, are independent and identically distributed. We will verify Lindeberg's condition and subsequently apply the central limit theorem of Lindeberg-Feller.

\begin{thm}\label{thm3}
Suppose that Assumptions \ref{a1}--\ref{a4} hold and that $\sigma_L^2(n) > 0$ for sufficiently large $n$.

a) Then convergence in distribution jointly in $n, m \to \infty$ holds:
\begin{align*}
\frac{ S'(n,m)-m\mu_L(n)}{\sqrt{m}\sigma_L(n)} \overset{d}{\longrightarrow} N(0,1) \quad \text{as } n, m \to \infty,
\end{align*}
where $\mu_L(n)$ and $\sigma^2_L(n)$ are defined in Theorem~\ref{thm1}.

b) If, in addition, $N(t) \overset{a.s.}{\longrightarrow} \infty$ as $t \to \infty$, then for $S(n,t) = S'(n,N(t))$, we have convergence in distribution:
\begin{align*}
\frac{ S'(n,N(t))-N(t)\mu_L(n)}{\sqrt{N(t)}\sigma_L(n)} \overset{d}{\longrightarrow} N(0,1) \quad \text{as } n, t \to \infty.
\end{align*}
\end{thm}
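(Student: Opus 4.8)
The plan is to prove part a) via the Lindeberg--Feller central limit theorem applied to the row-wise triangular array, and then to deduce part b) from part a) by an Anscombe-type argument that handles the random index $N(t)$.

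For part a), fix the double sequence and consider for each pair $(n,m)$ the independent, identically distributed summands $L_1(n),\dots,L_m(n)$, where $L_k(n)=\sum_{i,j=1}^n G_{i,j,k}$. After centering and scaling we are looking at $\sum_{k=1}^m \big(L_k(n)-\mu_L(n)\big)\big/(\sqrt{m}\,\sigma_L(n))$, which is exactly the normalized sum of a triangular array whose $k$-th row entry has mean zero and variance $1/m$. The classical Lindeberg--Feller theorem then gives convergence to $N(0,1)$ provided the Lindeberg condition holds: for every $\varepsilon>0$,
\begin{align*}
\frac{1}{\sigma_L^2(n)}\,\E\!\left[\big(L_1(n)-\mu_L(n)\big)^2\,\I\!\left(\big|L_1(n)-\mu_L(n)\big|>\varepsilon\sqrt{m}\,\sigma_L(n)\right)\right]\longrightarrow 0.
\end{align*}
The subtlety is that $n$ and $m$ tend to infinity jointly, so one cannot simply hold $L_1(n)$ fixed. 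The key observation is that $L_1(n)-\mu_L(n)$ has mean zero and variance exactly $\sigma_L^2(n)$, so the random variable $W_n:=\big(L_1(n)-\mu_L(n)\big)\big/\sigma_L(n)$ is $L^2$-normalized with $\E(W_n^2)=1$ for all $n$. The Lindeberg expression above equals $\E\big[W_n^2\,\I(|W_n|>\varepsilon\sqrt m)\big]$. Since $\{W_n^2\}$ is a uniformly integrable family (each has unit second moment; more carefully, one should check a uniform integrability bound, which follows because $W_n^2$ is uniformly integrable iff $\sup_n \E[W_n^2\I(|W_n|>a)]\to 0$ as $a\to\infty$ — this needs a short argument, e.g. that $W_n \convD N(0,1)$ along the relevant subsequence by part a) of Theorem~\ref{thm2} after dividing by $\sqrt m$, giving convergence of second moments hence uniform integrability, or a direct fourth-moment-type bound), and since $\varepsilon\sqrt m\to\infty$, the tail contribution vanishes uniformly. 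Thus the Lindeberg condition holds and part a) follows.

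For part b), assume $N(t)\overset{a.s.}{\to}\infty$ as $t\to\infty$ and recall $N(t)$ is independent of the array $(G_k)_{k\in\N}$, hence of every $L_k(n)$. The strategy is to condition on $N(t)$: writing $T_{n,m}:=\big(S'(n,m)-m\mu_L(n)\big)\big/(\sqrt m\,\sigma_L(n))$, part a) says $T_{n,m}\convD N(0,1)$ jointly in $n,m\to\infty$, equivalently the characteristic functions satisfy $\sup_{m\ge M}\big|\E(e^{isT_{n,m}})-e^{-s^2/2}\big|\to 0$ as $n\to\infty$ for each fixed $M$ and each $s$ — this uniform-in-$m$ reformulation of joint convergence is what makes the substitution $m=N(t)$ legitimate. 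Then
\begin{align*}
\E\!\left(e^{is\,T_{n,N(t)}}\right)=\sum_{m=0}^\infty P(N(t)=m)\,\E\!\left(e^{is\,T_{n,m}}\right),
\end{align*}
and splitting the sum at a threshold $M$, the head $\sum_{m<M}P(N(t)=m)(\cdots)$ is bounded by $P(N(t)<M)\to 0$ as $t\to\infty$, while the tail is within $\sup_{m\ge M}|\E(e^{isT_{n,m}})-e^{-s^2/2}|$ of $e^{-s^2/2}$. Letting $n,t\to\infty$ with $M\to\infty$ slowly, both error terms vanish, so $\E(e^{isT_{n,N(t)}})\to e^{-s^2/2}$, and since $T_{n,N(t)}$ is precisely the quantity in the statement, Lévy's continuity theorem finishes the proof.

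I expect the main obstacle to be the uniform integrability step in part a): verifying that the normalized per-event losses $W_n$ have uniformly integrable squares is not immediate from $\E(W_n^2)=1$ alone, and one must either invoke that $W_n\convD N(0,1)$ (via the $\sqrt m$-rescaled version of Theorem~\ref{thm2}a)) together with convergence of the (already controlled) second moments to get uniform integrability, or else establish a direct bound — for instance, using the explicit variance formula in Theorem~\ref{thm1}b) which shows $\sigma_L^2(n)$ grows like $n^3$ when the triple-covariance term is nonzero (or like $n^2$ otherwise), and comparing with a fourth-moment estimate for $L_1(n)$. Getting this uniformity cleanly, so that the joint limit $n,m\to\infty$ is genuinely justified rather than just the iterated limit, is the crux; the rest is bookkeeping with Lindeberg--Feller and the Anscombe-type conditioning argument.
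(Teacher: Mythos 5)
Your proof of part (a) follows essentially the same route as the paper: apply Lindeberg--Feller to the i.i.d.\ standardized summands $W_n=(L_1(n)-\mu_L(n))/\sigma_L(n)$, and reduce the Lindeberg condition to uniform integrability of $W_n^2$. You correctly isolate the crux — establishing that uniform integrability — and the mechanism you sketch (combine $W_n\convD N(0,1)$, which is Theorem~\ref{thm2}(a) with $m=1$, with $\E(W_n^2)=1$ to conclude that $\{W_n^2\}$ is uniformly integrable) is exactly what the paper carries out, implemented via the Skorokhod representation theorem followed by a generalized dominated convergence argument. Two small corrections: the parenthetical ``after dividing by $\sqrt m$'' is a detour, since the $m=1$ case already gives $W_n\convD N(0,1)$ directly; and the alternative ``direct fourth-moment-type bound'' you float is unavailable under Assumption~\ref{a1}, which only guarantees second moments of $G_{i,j}$, so that fallback should be dropped.

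For part (b) your route is genuinely different. You work at the level of characteristic functions, split the series over $m$ at a threshold $M$, bound the head by $P(N(t)<M)$ and the tail error by $\sup_{m\ge M}\bigl|\E(e^{isT_{n,m}})-e^{-s^2/2}\bigr|$, and send $M\to\infty$ slowly with $t$. The paper instead realizes the probability space as a product, fixes a path $\omega$ of $N$ on the almost-sure event $\{N(t)\to\infty\}$, applies part (a) to the deterministic sequence $m(n)=N(t_n)(\omega)$, and integrates out $\omega$ by bounded convergence. Both arguments lean on the independence of $N$ from the loss arrays (Assumption~\ref{a4}); the pathwise conditioning is slightly cleaner because it sidesteps any discussion of uniformity in $m$. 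One misstatement in your version should be fixed: the assertion that joint convergence is ``equivalently'' expressed as $\sup_{m\ge M}\bigl|\E(e^{isT_{n,m}})-e^{-s^2/2}\bigr|\to 0$ as $n\to\infty$ for each \emph{fixed} $M$ is false in general, since joint convergence controls only the regime where both indices are large; the supremum at a fixed $M$ also involves small $m$, which would additionally require the fixed-$m$ convergence from Theorem~\ref{thm2}(a). Your actual argument does not rely on this equivalence — you ultimately let $M\to\infty$, so only the regime where $n$ and $M$ are simultaneously large matters, and that is precisely what part (a) delivers — but the ``equivalently'' should be removed.
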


Clearly, Theorem~\ref{thm3} can be applied to approximate $S(n,t) = S'(n,N(t))$ for sufficiently large $n$ and $t$, similarly to Remark~\ref{rmk2}.

\subsection{Central limit theorems as the point in time increases}\label{t incr}

We now consider central limit theorems for the total loss of the insurance company as the time horizon $t \to \infty$ increases, while the number of insurance contracts $n \in \mathbb{N}$ remains fixed. As before, these results can be used for loss approximations, as explained in Remark~\ref{rmk2}.

\begin{thm}\label{thm4}
Suppose that Assumptions \ref{a1}--\ref{a4} hold.

a) If $\sigma_L^2(n)>0$, then we obtain convergence in distribution
\begin{align*}
\frac{ S'(n,m)-m\mu_L(n)}{\sqrt{m}\sigma_L(n)} \overset{d}{\longrightarrow}N(0,1)~\text{as}~m\to\infty,
\end{align*}
where $ \mu_L(n)$ and $\sigma^2_L(n)$ are defined  in Theorem \ref{thm1}.

b)  If $\sigma_L^2(n)>0$ and in addition $N(t)\overset{a.s.}{\rightarrow} \infty$ as $t\to\infty$,  then for $S(n,t) = S'(n,N(t))$, we have convergence in distribution:
\begin{align*}
\frac{ S'(n,N(t))-N(t)\mu_L(n)}{\sqrt{N(t)}\sigma_L(n)} \overset{d}{\longrightarrow}N(0,1) ~\text{as}~t\to\infty.
\end{align*}
\end{thm}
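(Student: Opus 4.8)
For part a), the plan is to reduce everything to the classical i.i.d.\ central limit theorem. Under Assumption~\ref{a4} the random variables $L_k(n)=\sum_{i,j=1}^n G_{i,j,k}$, $k\in\N$, are independent and identically distributed, and since by Assumption~\ref{a1} each summand $G_{i,j,1}$ has a finite second moment, the finite sum $L_1(n)$ satisfies $\E(L_1(n)^2)<\infty$; together with the hypothesis $\sigma_L^2(n)>0$ this gives $\sigma_L^2(n)\in(0,\infty)$, with $\mu_L(n)$ and $\sigma_L^2(n)$ as in Theorem~\ref{thm1}. Since $S'(n,m)=\sum_{k=1}^m L_k(n)$, the Lindeberg-L\'evy theorem applied to the i.i.d.\ sequence $(L_k(n))_{k\in\N}$ yields the claim immediately; no further work is required.

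For part b), I would exploit the independence, granted by Assumption~\ref{a4}, of the counting process $N$ and the sequence $(G_k,I_k,Z_k)_{k\in\N}$, hence of $N(t)$ and $(L_k(n))_{k\in\N}$. Write $\zeta^{(m)}=(S'(n,m)-m\mu_L(n))/(\sqrt{m}\,\sigma_L(n))$ for $m\ge1$, set $\zeta^{(0)}:=0$, and let $\zeta_t$ denote the statistic in the claim, with the convention $\zeta_t=0$ on the event $\{N(t)=0\}$. Conditioning on $N(t)$ and using independence, for any bounded continuous $g$ one obtains
\begin{align*}
\E[g(\zeta_t)]=\sum_{m=0}^\infty P(N(t)=m)\,\E[g(\zeta^{(m)})].
\end{align*}
By part a), $\E[g(\zeta^{(m)})]\to\E[g(Z)]$ as $m\to\infty$ for $Z\sim N(0,1)$, so $a_m:=|\E[g(\zeta^{(m)})]-\E[g(Z)]|$ is bounded by $2\|g\|_\infty$ and tends to $0$. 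Splitting the series at a level $M$, the initial block is at most $2\|g\|_\infty\,P(N(t)\le M)$ and the remainder at most $\sup_{m>M}a_m$; since $N(t)\overset{a.s.}{\longrightarrow}\infty$ forces $P(N(t)\le M)\to0$ as $t\to\infty$ for every fixed $M$, choosing $M$ large and then $t$ large gives $\E[g(\zeta_t)]\to\E[g(Z)]$, i.e.\ $\zeta_t\overset{d}{\longrightarrow}N(0,1)$ by the portmanteau theorem.

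Neither part involves anything beyond routine estimates; the only point deserving care is the bookkeeping in part b), namely making the conditioning identity precise and absorbing the null event $\{N(t)=0\}$, which is why I truncate at a level $M$ and use the uniformity of $a_m\to0$ rather than passing to the limit termwise. Equivalently, part b) follows from an Anscombe-type central limit theorem for a random index independent of the summands, but the direct conditioning argument is self-contained and parallels the proof of Theorem~\ref{thm2}.
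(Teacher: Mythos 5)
Your part a) is identical to the paper's: apply Lindeberg--L\'evy directly to the i.i.d.\ sequence $L_k(n)=\sum_{i,j=1}^n G_{i,j,k}$, noting $\E(L_1(n)^2)<\infty$ and $\sigma_L^2(n)>0$. For part b) the paper takes a different route. It argues (as in its proof of Theorem~\ref{thm3}~b)) via a product-space reduction: by Assumption~\ref{a4} one may put $(L_k(n))_k$ and $N$ on independent factors $(\Omega_1,P_1)$ and $(\Omega_2,P_2)$, fix an arbitrary $\omega\in\{\lim_{t\to\infty}N(t)=\infty\}$, apply part a) along the deterministic sequence $m(t)=N(t)(\omega)\to\infty$ to get pointwise convergence of the conditional probabilities $h_t(\omega)\to\Phi_{0,1}(x)$, and then conclude by dominated convergence in $\omega$. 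Your proof instead conditions on the \emph{value} $N(t)=m$, writes $\E[g(\zeta_t)]=\sum_m P(N(t)=m)\,\E[g(\zeta^{(m)})]$, and closes the argument with a clean truncation: bound the head by $2\|g\|_\infty P(N(t)\le M)$ (which vanishes because $N(t)\to\infty$ a.s.\ implies $N(t)\to\infty$ in probability) and the tail by $\sup_{m>M}|\E[g(\zeta^{(m)})]-\E[g(Z)]|$. Both arguments are correct and rest on the same independence of $N$ from the $L_k$'s; yours is more self-contained and elementary (it avoids the ``without loss of generality, product space'' device and dominated convergence, using an explicit $\varepsilon$-split instead), and in fact mirrors the style of the paper's proof of Theorem~\ref{thm2}~b) rather than Theorem~\ref{thm3}~b). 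Your remark that this is an Anscombe-type result matches the paper's own framing in the introduction.
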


Of course, Theorem~\ref{thm4} can be used to approximate the loss distribution of $S(n,t)=S'(n,N(t))$ for fixed $n$ and sufficiently large $t$, analogous to Remark~\ref{rmk2}.

\smallskip

We now study special cases related to Poisson counting processes. First, we consider a homogeneous Poisson process, exploiting the fact that Poisson processes are L\'evy processes, meaning that the increments $N(\lfloor t \rfloor) - N(\lfloor t \rfloor - 1), \dots, N(1) - N(0)$ are independent and identically distributed. This property enables a particularly convenient approximation of the insurance company's loss distribution in applications; see Remark~\ref{rmk3b}. Second, we extend the result to Cox processes, which are defined in terms of increasing random time changes $T$; see Example~\ref{exa}.

\begin{thm}\label{thm4b}
Suppose that Assumptions \ref{a1}--\ref{a4} hold. 

a) If $\sigma^2_S(n,t)>0$ for sufficiently large $t$ and the counting process $N=(N(t))_{t\in[0,\infty)}$ is a homogeneous Poisson process with intensity $\lambda\in(0,\infty)$, then for $S(n,t) = S'(n,N(t))$, we have convergence in distribution:
\begin{align*}
\frac{S'(n, N(t))-\lambda t \mu_L(n)}{\sqrt{\lambda t(\mu_L(n)^2+\sigma^2_L(n))}}\overset{d}{\longrightarrow}N(0,1)~\text{as}~t\to\infty,
\end{align*}
where $ \mu_L(n)$ and $\sigma^2_L(n)$ are defined  in Theorem \ref{thm1}.

b) Suppose that $\sigma^2_S(n,t)>0$ for sufficiently large $t$ and that the jump process $N=(N(t))_{t\in[0,\infty)}$ is of the form $N(t)=\widetilde N(T(t))$, $t\in[0,\infty)$, where  $\widetilde N=(\widetilde N(t))_{t\in[0,\infty)}$ is a homogeneous Poisson process with intensity $\lambda\in(0,\infty)$, and $T=(T(t))_{t\in[0,\infty)}$ is a stochastic process with values in $[0,\infty)$ satisfying $\lim_{t\to\infty}T(t)=\infty$ almost surely. Assume further that the random elements $\widetilde N$, $T$, and $G_{k}$, $k\in\N$, are  independent. Then, for  $S(n,t)=S'(n, \widetilde N(T(t)))$, we have the convergence in distribution:
\begin{align*}
\frac{ S'(n, \widetilde N(T(t)))-\lambda T(t) \mu_L(n)}{\sqrt{\lambda T(t) (\mu_L(n)^2+\sigma^2_L(n))}}\overset{d}{\longrightarrow}N(0,1)~\text{as}~t\to\infty.
\end{align*}
\end{thm}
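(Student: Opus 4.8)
The plan is to exploit the fact that, when $N$ is a homogeneous Poisson process, the total loss process $S(n,\cdot) = S'(n,N(\cdot))$ is a compound Poisson process — hence a L\'evy process with independent, stationary increments and finite second moment — and then to reduce part a) to the classical i.i.d.\ central limit theorem and part b) to part a) by conditioning on the time change $T$. Fix $n$ and abbreviate $\mu = \mu_L(n)$, $\sigma^2 = \sigma_L^2(n)$, $v = \mu^2 + \sigma^2$. Since for a homogeneous Poisson process Theorem~\ref{thm1} gives $\sigma_S^2(n,t) = \lambda t\, v$, the hypothesis $\sigma_S^2(n,t) > 0$ forces $v > 0$; equally, in part b) one checks from the conditional variance decomposition that $\sigma_S^2(n,t) > 0$ is impossible unless $v > 0$.

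For part a), set $\xi_\ell := S(n,\ell) - S(n,\ell-1) = \sum_{k=N(\ell-1)+1}^{N(\ell)} L_k(n)$ for $\ell\in\N$. By Assumption~\ref{a4} and the independence and stationarity of the increments of $N$, the $\xi_\ell$ are i.i.d.\ (each a compound-Poisson sum over a unit time interval with disjoint blocks of the $L_k(n)$), and by Theorem~\ref{thm1} applied with a unit-time Poisson process, $\E(\xi_1) = \lambda\mu$ and $\Var(\xi_1) = \lambda v$; the second moment $\E(\xi_1^2) = \lambda v + \lambda^2\mu^2$ is finite because $\E(L_1(n)^2) < \infty$ under Assumption~\ref{a1}. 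The classical central limit theorem then yields
\begin{align*}
\frac{S(n,\lfloor t\rfloor) - \lfloor t\rfloor\lambda\mu}{\sqrt{\lfloor t\rfloor\lambda v}} \overset{d}{\longrightarrow} N(0,1)\qquad\text{as } t\to\infty.
\end{align*}
To replace $\lfloor t\rfloor$ by $t$: the normalizing sequences $\sqrt{\lfloor t\rfloor\lambda v}$ and $\sqrt{t\lambda v}$ are asymptotically equivalent since $\lfloor t\rfloor/t\to 1$; the centering difference $\lfloor t\rfloor\lambda\mu - t\lambda\mu$ is bounded, hence $o(\sqrt t)$; and the residual increment $S(n,t) - S(n,\lfloor t\rfloor)$ is compound Poisson over $[\lfloor t\rfloor,t]$ with variance at most $\lambda v$, hence $O_P(1) = o_P(\sqrt t)$ by Markov's inequality. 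Slutsky's theorem assembles these into the stated convergence with centering $\lambda t\mu_L(n)$ and scaling $\sqrt{\lambda t(\mu_L(n)^2+\sigma_L^2(n))}$.

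For part b), the independence of $\widetilde N$, $T$, and $(G_k)_{k\in\N}$ implies that the compound Poisson process $s\mapsto S'(n,\widetilde N(s))$ is independent of $T$; fixing a regular conditional distribution given $T$, this process retains its unconditional law conditionally on $T$. For $P_T$-almost every path $\tau$ of $T$ one has $\tau(t)\to\infty$ as $t\to\infty$, so part a) applied with the deterministic diverging time $\tau(t)$ in place of $t$ gives, for every $u\in\R$,
\begin{align*}
\psi_t(u\mid\tau) := \E\!\left[\exp\!\left(iu\,\frac{S'(n,\widetilde N(\tau(t))) - \lambda\tau(t)\mu}{\sqrt{\lambda\tau(t)\,v}}\right)\right] \longrightarrow e^{-u^2/2}\qquad (t\to\infty).
\end{align*}
The characteristic function of the normalized statistic in the theorem equals $\E\big[\psi_t(u\mid T)\big]$; since $|\psi_t(u\mid\cdot)|\le 1$, bounded convergence gives $\E[\psi_t(u\mid T)]\to e^{-u^2/2}$ for each $u$, and L\'evy's continuity theorem delivers convergence in distribution to $N(0,1)$.

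The main obstacle is the clean transfer of the deterministic-time statement of part a) to the random, path-dependent normalization $\lambda T(t)$ of part b): one must make the conditioning on $T$ rigorous — using independence to ensure the compound Poisson process keeps its law conditionally on $T$, the a.s.\ divergence $T(t)\to\infty$ to invoke part a) along almost every path, and a bounded-convergence argument at the level of characteristic functions to return to the unconditional distribution. Everything else reduces to the classical i.i.d.\ CLT together with routine Slutsky-type bookkeeping of the fractional-part remainder in part a).
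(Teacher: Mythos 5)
Your proof is correct and follows essentially the same route as the paper's: in part a) both decompose the compound Poisson total into i.i.d.\ unit-time increments plus a fractional-part remainder, apply the classical CLT to the main block, and invoke Slutsky after showing the remainder and the $\lfloor t\rfloor$-to-$t$ discrepancy are negligible; in part b) both condition on the path of $T$, apply part a) along almost every diverging path, and integrate out via dominated (for you, bounded) convergence. The only cosmetic differences are that you compute $\Var(\xi_1)=\lambda(\mu_L(n)^2+\sigma_L^2(n))$ directly from Theorem~\ref{thm1} at $t=1$ rather than arguing via periodicity of the remainder's variance, and you work with characteristic functions rather than distribution functions in part b).
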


\begin{rmk}\label{rmk3b}
The previous result can be applied in practice to approximate the distribution of the insurance company's loss $S(n,t)$ for sufficiently large $t$.  Suppose that the hypotheses of Theorem~\ref{thm4b} hold and consider the setting of part a).  Letting $\Phi_{\mu,\sigma^2}$ be defined as in Theorem \ref{thm2} and letting $x \in \R$, we obtain for sufficiently large $t$:
\begin{align*}
P\big(S'(n, N(t))\le x\big)\approx \Phi_{\lambda t \mu_L(n),\lambda t(\mu_L(n)^2+\sigma^2_L(n))}(x).
\end{align*}
Suppose that the hypotheses of Theorem~\ref{thm4b} hold and consider the setting of part b). Denoting by $\mu_{T(t)}$ the distribution of $T(t)$ and letting $x \in \R$, we obtain for sufficiently large $t$:
\begin{align*}
P\big (S'(n, \widetilde{N}(T(t))) \le x\big)\; \approx \; \int \Phi_{\lambda s \mu_L(n),\,\lambda s(\mu_L(n)^2+\sigma^2_L(n))}(x) \,d\mu_{T(t)}(s).
\end{align*}
\end{rmk}

\subsection{Relation to  the  collective risk model}\label{rel}

Before discussing numerical case studies, another important aspect deserves emphasis—namely, that under certain conditions, the proposed model can be reformulated as a classical frequency-severity model, also known as the collective model. These conditions require that the arrays $I$ and $Z$ are independent and that the conditional loss expenditures $Z_{i,j}$, $i,j \in \mathbb{N}$, are independent, as discussed in Example~\ref{exa2} a). To align with the notation of a classical collective model, the frequency component must be adjusted, while the severity component follows directly from the conditional loss expenditures. In this section, we briefly discuss this transformation. Our notation follows Assumption~\ref{a4}.

Suppose that Assumptions~\ref{a1}--\ref{a4} hold and that the conditional loss expenditures $Z_{i,j}$, $i,j \in \mathbb{N}$, are independent and identically distributed. Let $\widetilde{Z}_i$, $i \in \mathbb{N}$, denote another independent sequence of i.i.d. random variables such that $\widetilde{Z}_1 \overset{d}{=} Z_{1,1}$. The total number of claims up to time $t$ is given by
\begin{align*}
M(n,t) = \sum_{k=1}^{N(t)} \sum_{i=1}^n \sum_{j=1}^n I_{i,j,k}.
\end{align*}
Defining $M(n) = (M(n,t))_{t \in [0,\infty)}$, the random elements $M(n)$ and $\widetilde{Z}_i$, $i \in \mathbb{N}$, are jointly independent.

\begin{thm} \label{thmrel}
Under the assumptions outlined above, we obtain equality in distribution with a collective model:
\begin{align*}
\big(S(n, t),M(n, t)\big) \overset{d}{=} \left(\sum_{i=1}^{M(n,t)} \widetilde Z_i, M(n,t)\right).
\end{align*}
\end{thm}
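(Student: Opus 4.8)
The plan is to show equality in distribution of the pair $\big(S(n,t), M(n,t)\big)$ with the claimed collective-model pair by conditioning on the infections array and the counting process, and then matching conditional distributions. First I would condition on the full collection $\big(N, (I_k)_{k\in\N}\big)$. Under Assumption~\ref{a4} together with the independence of $I$ and $Z$ and the i.i.d.\ assumption on the $Z_{i,j}$, the conditional law of $S(n,t) = \sum_{k=1}^{N(t)} \sum_{i,j=1}^n I_{i,j,k} Z_{i,j,k}$ given $\big(N, (I_k)_{k}\big)$ is that of a sum of $M(n,t) = \sum_{k=1}^{N(t)} \sum_{i,j=1}^n I_{i,j,k}$ independent copies of $Z_{1,1}$ — because each term $G_{i,j,k} = I_{i,j,k} Z_{i,j,k}$ with $I_{i,j,k}=1$ contributes an independent draw from the law of $Z_{1,1}$, and terms with $I_{i,j,k}=0$ contribute nothing, while all the $Z_{i,j,k}$ across distinct triples $(i,j,k)$ are mutually independent and independent of the $I$'s and of $N$. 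Thus conditionally on $M(n,t)=m$ and on $\big(N,(I_k)_k\big)$, $S(n,t)$ has the law of $\sum_{i=1}^m \widetilde Z_i$.

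Next I would observe that $M(n,t)$ is itself a deterministic (measurable) function of $\big(N, (I_k)_k\big)$, so the conditional law of $S(n,t)$ given $\big(N,(I_k)_k\big)$ depends on that conditioning only through the value of $M(n,t)$. Consequently, for any bounded measurable $g,h$,
\begin{align*}
\E\big[g(S(n,t))\,h(M(n,t))\big]
&= \E\Big[\,h(M(n,t))\,\E\big[g(S(n,t)) \,\big|\, N,(I_k)_k\big]\Big]\\
&= \E\Big[\,h(M(n,t))\,\psi_g\big(M(n,t)\big)\Big],
\end{align*}
where $\psi_g(m) := \E\big[g\big(\sum_{i=1}^m \widetilde Z_i\big)\big]$, using the independence of the $\widetilde Z_i$ sequence from everything else. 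The right-hand side is exactly $\E\big[g\big(\sum_{i=1}^{M(n,t)} \widetilde Z_i\big)\,h(M(n,t))\big]$, which establishes the joint equality in distribution since bounded measurable test functions determine the law on $[0,\infty)\times\N_0$.

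The main obstacle — really the only delicate point — is making the first step rigorous: one must carefully justify that, conditionally on the $I$ arrays and on $N$, the collection $\{Z_{i,j,k} : I_{i,j,k}=1,\ 1\le i,j\le n,\ 1\le k\le N(t)\}$ consists of independent random variables each distributed as $Z_{1,1}$. This uses (a) the independence of $(G_k,I_k,Z_k)_{k\in\N}$ across $k$, (b) within a fixed $k$, the independence of $I_k$ from $Z_k$ and the i.i.d.\ structure of $(Z_{i,j,k})_{i,j}$, and (c) the independence of $N$ from the array tuples, all of which are granted by Assumption~\ref{a4} and the hypotheses of the theorem. A clean way to package this is via the functional/regular-conditional-distribution formalism: write $S(n,t)$ and $M(n,t)$ on an enlarged space where the $Z$'s are an i.i.d.\ sequence indexed by $(i,j,k)$ independent of $(N,(I_k)_k)$, and note that a sum of a fixed pattern of Bernoulli-masked i.i.d.\ summands has the same law as the sum of the first (number of ones) of an independent i.i.d.\ sequence. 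Once this conditional identity is in place, the rest is the routine conditioning computation sketched above; no further growth conditions or moment assumptions beyond those already standing are needed.
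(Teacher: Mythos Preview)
Your proposal is correct and follows essentially the same route as the paper's proof: both arguments condition on $\big(N,(I_k)_k\big)$, use the independence and i.i.d.\ structure of the $Z$'s to identify the conditional law of $S(n,t)$ as a sum of $M(n,t)$ independent copies of $Z_{1,1}$, and then integrate out. The only cosmetic difference is that the paper carries this out via the joint Laplace transform $(u,v)\mapsto \E(e^{-uS(n,t)-vM(n,t)})$, whereas you use general bounded measurable test functions $g,h$; both choices characterize the joint law on $[0,\infty)\times\N_0$ and lead to the same computation.
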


\begin{rmk}\label{rmk4}
In the absence of transmissions and under independent infections, as described in Example~\ref{exa1} a), the results of Theorem~\ref{thmrel} simplify further. In this case, the total number of claims is given by
$M(n,t) = \sum_{k=1}^{N(t)} \sum_{i=1}^n I_{i,i,k}.$
We highlight two special cases:
a) If $N(t) \equiv 1$, then the claim count follows a binomial distribution:
$M(n,t) \sim {\rm Bin}(n, p_J)$.
b) If $N$ is a homogeneous Poisson process with intensity $\lambda \in (0,\infty)$, and if $n = 1$ and $p_J = 1$, then the claim count follows a Poisson distribution:
$M(n,t) \sim {\rm Poi}(\lambda t)$.
\end{rmk}

\section{Approximations and interaction effects -- case studies}\label{impsim}

In this section, we present numerical experiments implemented in the statistical software R. Section~\ref{imp} outlines the general setup of the case studies and provides moment computations required for the approximations derived in Section~\ref{theory}. In Section~\ref{sim}, we assess the quality of selected distributional approximations established in Section~\ref{theory}. Finally, Section~\ref{comp} examines in greater detail the impact of dependencies in loss occurrences and conditional loss exposures on the distribution of total losses, with a particular focus on their tails.

\subsection{Implementation of the simulation studies}\label{imp}
 
In the numerical case studies, we will always assume that Assumptions~\ref{a1}--\ref{a4} hold. Infections and transmissions $I$ will be modeled using the Erd\H{o}s-R{\'e}nyi model with infections (Example~\ref{exa1} d)) and the contagion model (Example~\ref{exa1} e)). We focus on positively dependent conditional loss expenditures $Z$ (Example~\ref{exa2} c)). 
To simplify notation, we do not assume that the independent shocks are uniformly distributed on $[0,1]$, but instead belong to a parametric family of distributions. Specifically, we set
\begin{align}\label{lossexa}
Z_{i,j} =\widetilde  Z_{j} + \varepsilon_{i,j}, \quad i,j \in \mathbb{N},
\end{align}
where the contagious components $\widetilde Z_{j}$, $j \in \mathbb{N}$, are independent, $[0,\infty)$-valued random variables, each following a gamma distribution $\rm{G}(\nu, \kappa)$ with a common shape parameter $\nu \in (0,\infty)$ and scale parameter $\kappa \in (0,\infty)$. The gamma distribution forms a rich family that includes the exponential, chi-squared, and Erlang distributions. The shocks $\varepsilon_{i,j}$, $i,j \in \mathbb{N}$, are independent random variables following a half-normal distribution $\rm{HN}(\sigma^2)$ with parameter $\sigma^2 \in (0,\infty)$. Further replications of all random variables are generated according to Assumption~\ref{a4}, with the loss event counter $k$ added as the last subscript. The counting process $N$ is taken to be a Poisson process with intensity $\lambda \in (0,\infty)$. Using Theorem~\ref{thm1}, we explicitly compute the relevant moments for the case studies:
\begin{itemize}
\item[(i)] {\bf Erd\H{o}s-R{\'e}nyi model with infections} (Example~\ref{exa1} d)) 
\begin{align*}
 \mu_L(n) \; = \; \; &np_J\bigg(\nu\kappa+\sqrt{\frac{2\sigma^2}{\pi}}\bigg)+n(n-1)p_K\bigg(\nu\kappa+\sqrt{\frac{2\sigma^2}{\pi}}\bigg)\\
\sigma^2_L(n)\; =\;\;  &n\Bigg(p_J\bigg(\nu\kappa^2+\nu^2\kappa^2+2\nu\kappa\sqrt{\frac{2\sigma^2}{\pi}}+\sigma^2\bigg)-p_J^2\bigg(\nu\kappa+\sqrt{\frac{2\sigma^2}{\pi}}\bigg)^2\Bigg)\\
&+n(n-1)\Bigg(p_K\bigg(\nu\kappa^2+\nu^2\kappa^2+2\nu\kappa\sqrt{\frac{2\sigma^2}{\pi}}+\sigma^2\bigg)-p_K^2\bigg(\nu\kappa+\sqrt{\frac{2\sigma^2}{\pi}}\bigg)^2\\
&+2p_Kp_K\nu\kappa^2+(p_K-p_K^2)\bigg(\nu\kappa+\sqrt{\frac{2\sigma^2}{\pi}}\bigg)^2\Bigg)\\
&+n(n-1)(n-2)p_K^2\nu\kappa^2 \\
\mu_S(n,t)\; =\; \; &\lambda t \mu_L(n)
\\ \sigma^2_S(n,t)\; =\; \;  &\lambda t \mu_L^2(n)+\lambda t \sigma^2_L(n)
\end{align*}
\item [(ii)] {\bf Contagion model} (Example~\ref{exa1} e))
\begin{align*}
 \mu_L(n)\;= \; \; & np_J\bigg(\nu\kappa+\sqrt{\frac{2\sigma^2}{\pi}}\bigg)+n(n-1)p_Jp_K\bigg(\nu\kappa+\sqrt{\frac{2\sigma^2}{\pi}}\bigg)\\
\sigma^2_L(n)\;=\; \; &n\Bigg(p_J\bigg(\nu\kappa^2+\nu^2\kappa^2+2\nu\kappa\sqrt{\frac{2\sigma^2}{\pi}}+\sigma^2\bigg)-p_J^2\bigg(\nu\kappa+\sqrt{\frac{2\sigma^2}{\pi}}\bigg)^2\Bigg)\\
&+n(n-1)\Bigg(p_Jp_K\bigg(\nu\kappa^2+\nu^2\kappa^2+2\nu\kappa\sqrt{\frac{2\sigma^2}{\pi}}+\sigma^2\bigg)-p_J^2p_K^2\bigg(\nu\kappa+\sqrt{\frac{2\sigma^2}{\pi}}\bigg)^2\\
&+2p_Jp_K\bigg(\nu\kappa^2+\nu^2\kappa^2+2\nu\kappa\sqrt{\frac{2\sigma^2}{\pi}}+\frac{2\sigma^2}{\pi}\bigg)-2p_J^2p_K\bigg(\nu\kappa+\sqrt{\frac{2\sigma^2}{\pi}}\bigg)^2\Bigg)\\
&+n(n-1)(n-2)\Bigg(p_Jp_K^2\bigg(\nu\kappa^2+\nu^2\kappa^2+2\nu\kappa\sqrt{\frac{2\sigma^2}{\pi}}+\frac{2\sigma^2}{\pi}\bigg)-p_J^2p_K^2\bigg(\nu\kappa+\sqrt{\frac{2\sigma^2}{\pi}}\bigg)^2\Bigg)\\
\mu_S(n,t)\;= \; \; &\lambda t \mu_L(n) \\ 
\sigma^2_S(n,t)\;=\; \; &\lambda t \mu_L^2(n)+\lambda t \sigma^2_L(n)
\end{align*}
\end{itemize}

\subsection{Accuracy of the approximations}\label{sim}

We investigate the quality of selected distributional approximations suggested by our findings in Section~\ref{theory} through simulation. We consider the model specifications described in Section~\ref{imp}. In our simulations, we fix the model parameters as follows: $p_J = 0.25$, $p_K = 0.35$, $\nu = 0.75$, $\kappa = 1.75$, $\sigma^2 = 1.5$, and $\lambda = 1$. 

Theorem~\ref{thm2} addresses the case where the number of insurance contracts increases. Here, we fix the time point at $t = 1$ and vary the number of insurance contracts as $n = 15, 50, 200$. Theorem~\ref{thm3} establishes asymptotics when both the number of insurance contracts and the time horizon increase jointly. We examine the approximations for the pairs $(n,t) = (15,15), (50,25), (200,80)$. Finally, we assess the quality of the approximation in Theorem~\ref{thm4}, where the number of contracts remains fixed while the time horizon increases. We set the number of insurance contracts to $n = 10$ and vary the time points as $t = 15, 25, 80$. To obtain the distributions under consideration, we perform Monte Carlo simulations with 100,000 replications for each scenario.

Figure~\ref{fig3} presents P-P and Q-Q plots for the distributions in Theorem~\ref{thm2}~b), where the Monte Carlo estimate of the finite-sample distribution is plotted on the abscissa against the asymptotic distribution on the ordinate. Overall, the accuracy of the finite-sample approximation improves as the number of insurance contracts $n$ increases, ultimately reaching a high level of precision, confirming the practical applicability of Theorem~\ref{thm2}. More specifically, for $n = 15$, both the P-P and Q-Q plots indicate that the approximation is not yet highly accurate. A likely contributor to deviations in the upper tail is the heavier tail of the gamma distribution used for modeling loss expenditures in the finite distribution compared to the normal distributions in the asymptotic case. For $n = 50$, the P-P plots already suggest a good approximation in the central part of the distribution, though small deviations remain in the tails. For $n = 200$, the Q-Q plots indicate a highly precise approximation across the entire distribution. These findings hold for both the Erd\H{o}s-R{\'e}nyi model with infections and the contagion model. The empty regions in the P-P plots result from the strictly positive probability mass at zero in the underlying distributions.

Figure~\ref{fig4} also presents P-P and Q-Q plots for approximations based on Theorem~\ref{thm3}~b), where the finite-sample distribution on the abscissa is plotted against the asymptotic distribution on the ordinate. The accuracy of the finite-sample approximation improves as both the number of insurance contracts $n$ and the time horizon $t$ increase, ultimately reaching a high level of precision, confirming the practical applicability of Theorem~\ref{thm3}. More specifically, for $(n,t) = (15,15)$, the P-P and Q-Q plots indicate small deviations in the approximation. We observe discrepancies in the upper tail of the distribution, likely due to the heavier tail of the gamma distribution compared to the asymptotic normal distribution. For $(n,t) = (50,25)$, the approximation improves significantly, though minor deviations remain in the tails. Finally, for $(n,t) = (200,80)$, remaining deviations essentially vanish. These results hold for both the Erd\H{o}s-R{\'e}nyi model with infections and the contagion model. Overall, the approximation appears to perform better than in the setting of Theorem~\ref{thm2} for a given portfolio size $n$. This improvement is due to the fact that in Theorem~\ref{thm3}, both $n$ and $t$ tend to infinity.

Figure~\ref{fig5} presents P-P and Q-Q plots for the distributions in Theorem~\ref{thm4b}~a), where the empirical distribution on the abscissa is plotted against the asymptotic distribution on the ordinate. The accuracy of the approximation improves as the time horizon $t$ increases. A closer inspection of the P-P and Q-Q plots for $t = 15$ reveals that the approximation is not yet highly accurate. As before, deviations in the upper tail are likely due to the difference in tail heaviness between the gamma distribution used for modeling loss expenditures in the finite distribution and the asymptotic normal distribution. For $t = 25$, the P-P plots already indicate a good approximation in the central part of the distribution, while the Q-Q plots still show some deviations in the tails. By $t = 80$, these tail deviations are further reduced in the Q-Q plots. In general, the approximation in Theorem~\ref{thm3} appears to perform better compared to Theorem~\ref{thm4b} for the same values of $t$. The likely reason is that in Theorem~\ref{thm3}, both $n$ and $t$ tend to infinity, whereas in Theorem~\ref{thm4b}, $n$ is fixed and only $t$ tends to infinity.

\begin{figure}[H]
\hspace{0.75cm}
\begin{subfigure}{.275\textwidth}
\captionsetup{font=tiny}
\captionsetup{justification=centering}
 \centering
  \includegraphics[width=1\linewidth]{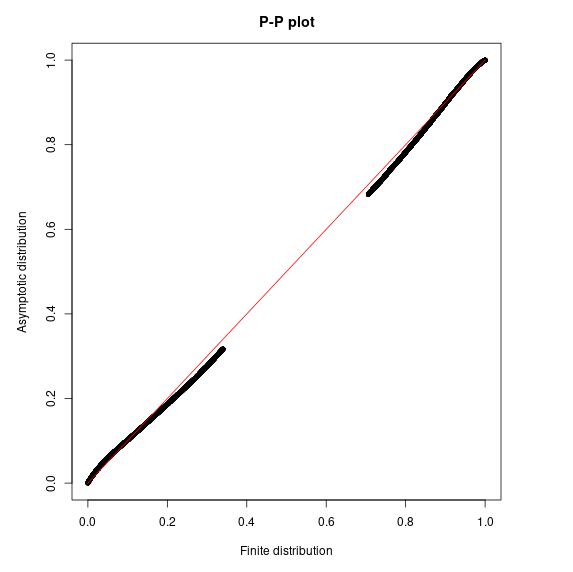}
    \caption*{Erd\H{o}s-R{\'e}nyi with infections, $n=15$.}
\end{subfigure}%
\begin{subfigure}{.275\textwidth}
\captionsetup{font=tiny}
\captionsetup{justification=centering}
 \centering
  \includegraphics[width=1\linewidth]{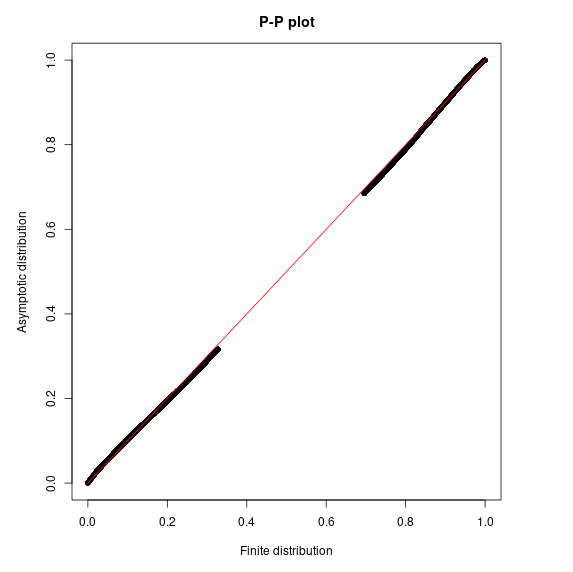}
    \caption*{Erd\H{o}s-R{\'e}nyi with infections, $n=50$.}
\end{subfigure}%
\begin{subfigure}{.275\textwidth}
\captionsetup{font=tiny}
\captionsetup{justification=centering}
 \centering
  \includegraphics[width=1\linewidth]{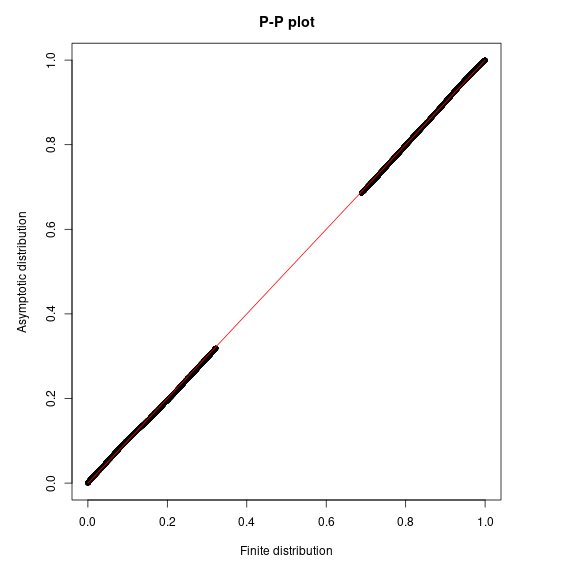}
    \caption*{Erd\H{o}s-R{\'e}nyi with infections, $n=200$.}
\end{subfigure}

\hspace{0.75cm}
\begin{subfigure}{.275\textwidth}
\captionsetup{font=tiny}
\captionsetup{justification=centering}
 \centering
  \includegraphics[width=1\linewidth]{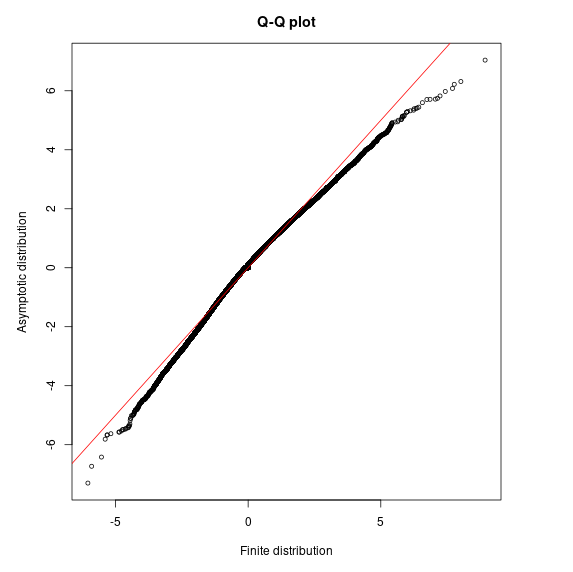}
    \caption*{Erd\H{o}s-R{\'e}nyi  with infections, $n=15$.}
\end{subfigure}%
\begin{subfigure}{.275\textwidth}
\captionsetup{font=tiny}
\captionsetup{justification=centering}
 \centering
  \includegraphics[width=1\linewidth]{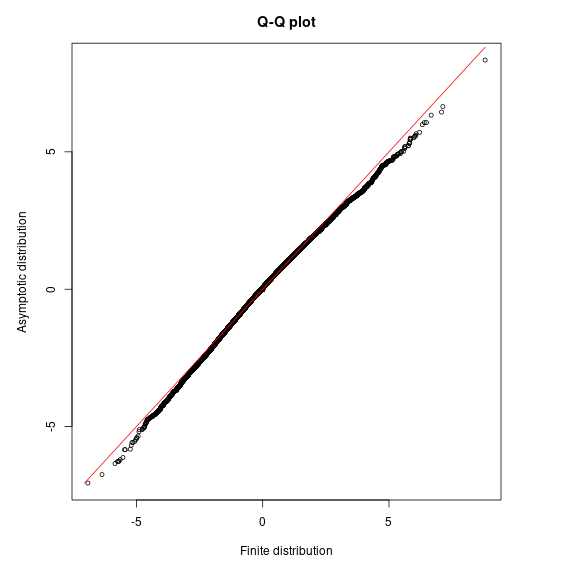}
    \caption*{Erd\H{o}s-R{\'e}nyi with infections, $n=50$.}
\end{subfigure}%
\begin{subfigure}{.275\textwidth}
\captionsetup{font=tiny}
\captionsetup{justification=centering}
 \centering
  \includegraphics[width=1\linewidth]{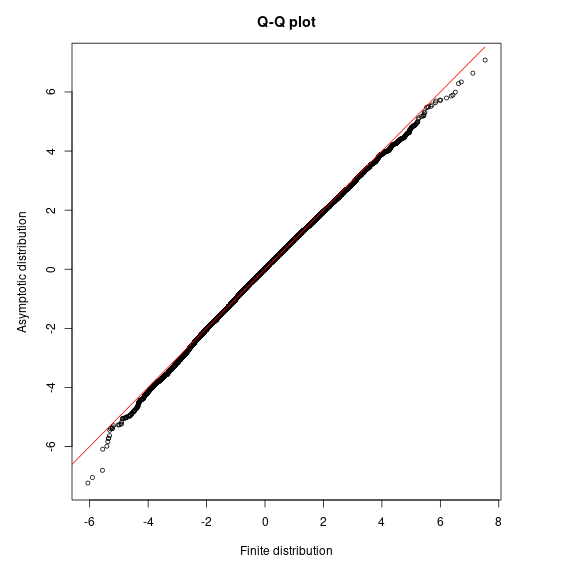}
    \caption*{Erd\H{o}s-R{\'e}nyi with infections, $n=200$.}
\end{subfigure}

\hspace{0.75cm}
\begin{subfigure}{.275\textwidth}
\captionsetup{font=tiny}
\captionsetup{justification=centering}
 \centering
  \includegraphics[width=1\linewidth]{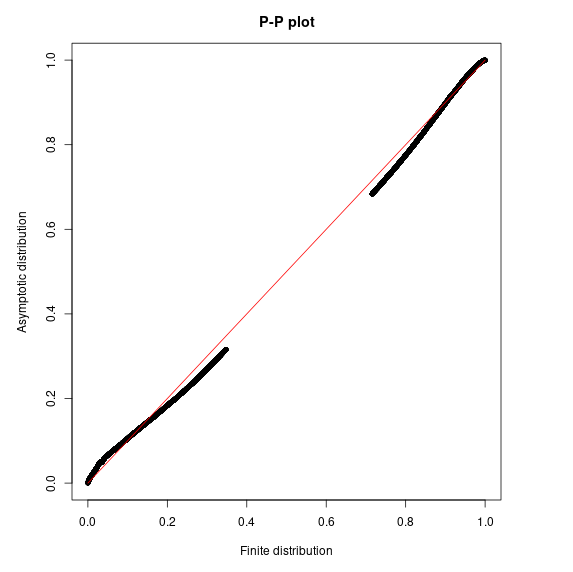}
    \caption*{Contagion case, $n=15$.}
\end{subfigure}%
\begin{subfigure}{.275\textwidth}
\captionsetup{font=tiny}
\captionsetup{justification=centering}
 \centering
  \includegraphics[width=1\linewidth]{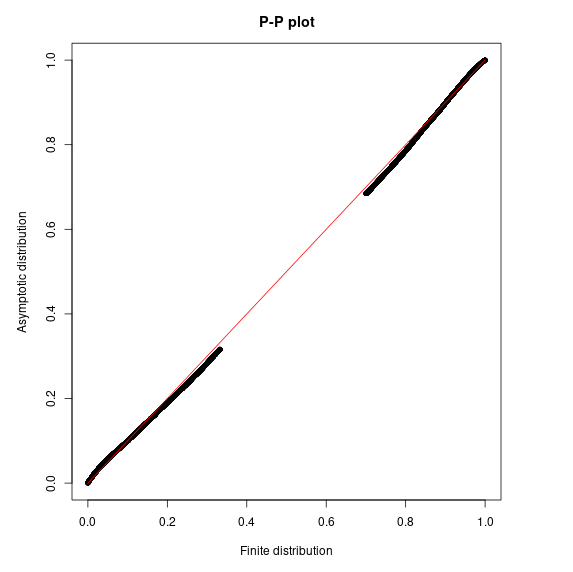}
    \caption*{Contagion case, $n=50$.}
\end{subfigure}%
\begin{subfigure}{.275\textwidth}
\captionsetup{font=tiny}
\captionsetup{justification=centering}
 \centering
  \includegraphics[width=1\linewidth]{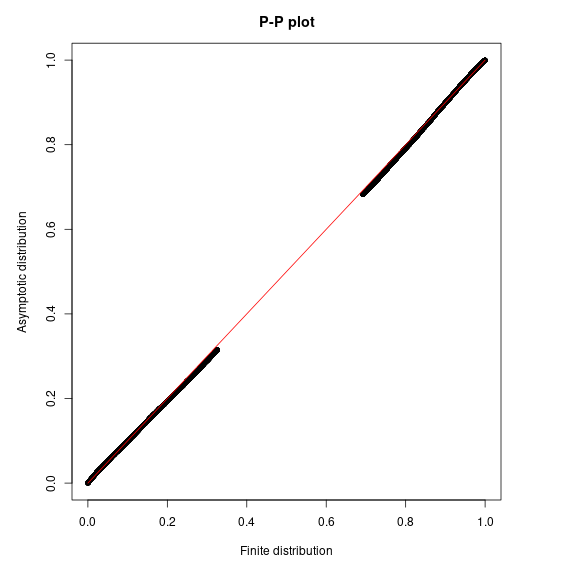}
    \caption*{Contagion case, $n=200$.}
\end{subfigure}

\hspace{0.75cm}
\begin{subfigure}{.275\textwidth}
\captionsetup{font=tiny}
\captionsetup{justification=centering}
 \centering
  \includegraphics[width=1\linewidth]{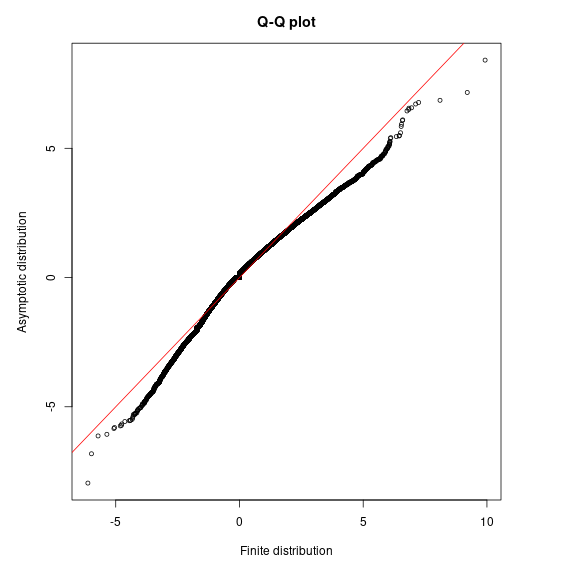}
    \caption*{Contagion case, $n=15$.}
\end{subfigure}%
\begin{subfigure}{.275\textwidth}
\captionsetup{font=tiny}
\captionsetup{justification=centering}
 \centering
  \includegraphics[width=1\linewidth]{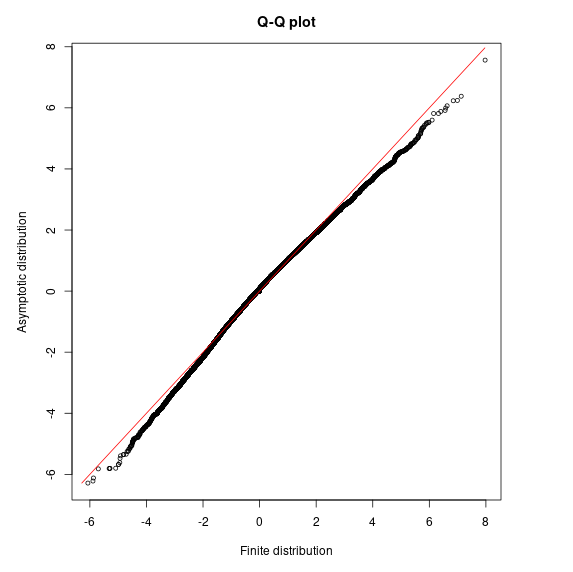}
    \caption*{Contagion case, $n=50$.}
\end{subfigure}%
\begin{subfigure}{.275\textwidth}
\captionsetup{font=tiny}
\captionsetup{justification=centering}
 \centering
  \includegraphics[width=1\linewidth]{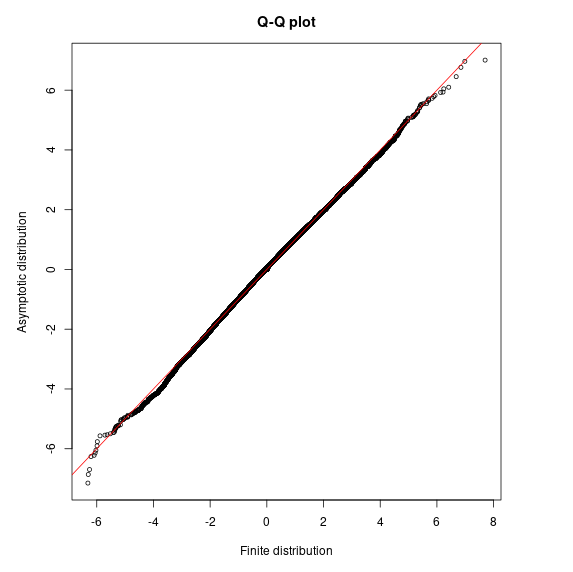}
    \caption*{Contagion case, $n=200$.}
\end{subfigure}
\caption{Plots related to Theorem \ref{thm2}.}
  \label{fig3}
\end{figure}

\begin{figure}[H]
\hspace{0.75cm}
\begin{subfigure}{.275\textwidth}
\captionsetup{font=tiny}
\captionsetup{justification=centering}
 \centering
  \includegraphics[width=1\linewidth]{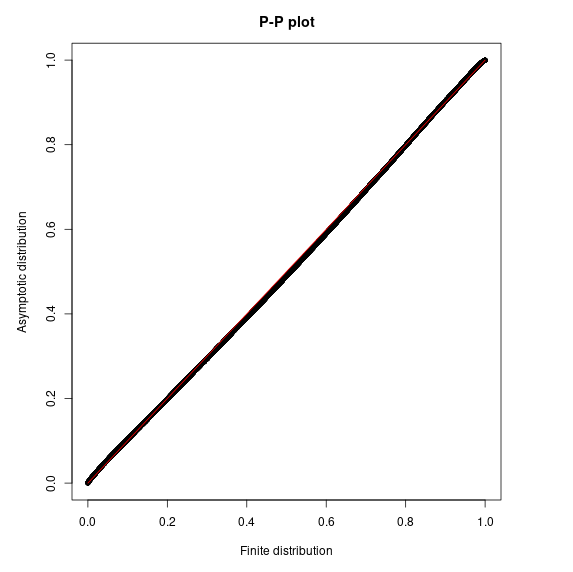}
    \caption*{Erd\H{o}s-R{\'e}nyi with infections, $(n,t)=(15,15)$.}
\end{subfigure}%
\begin{subfigure}{.275\textwidth}
\captionsetup{font=tiny}
\captionsetup{justification=centering}
 \centering
  \includegraphics[width=1\linewidth]{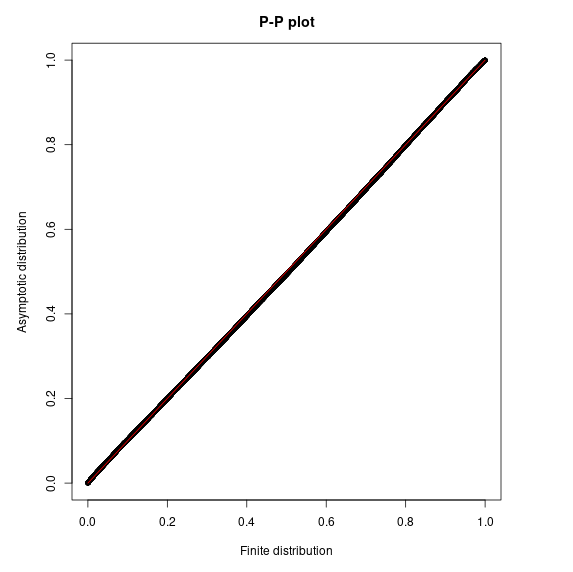}
    \caption*{Erd\H{o}s-R{\'e}nyi with infections, $(n,t)=(50,25)$.}
\end{subfigure}%
\begin{subfigure}{.275\textwidth}
\captionsetup{font=tiny}
\captionsetup{justification=centering}
 \centering
  \includegraphics[width=1\linewidth]{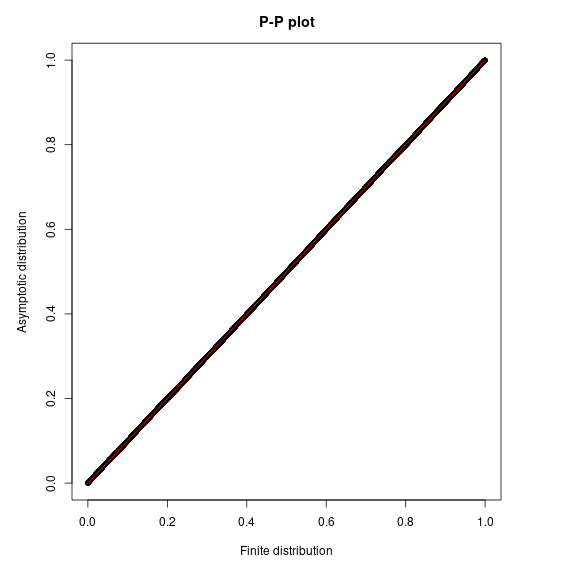}
    \caption*{Erd\H{o}s-R{\'e}nyi with infections, $(n,t)=(200,80)$.}
\end{subfigure}

\hspace{0.75cm}
\begin{subfigure}{.275\textwidth}
\captionsetup{font=tiny}
\captionsetup{justification=centering}
 \centering
  \includegraphics[width=1\linewidth]{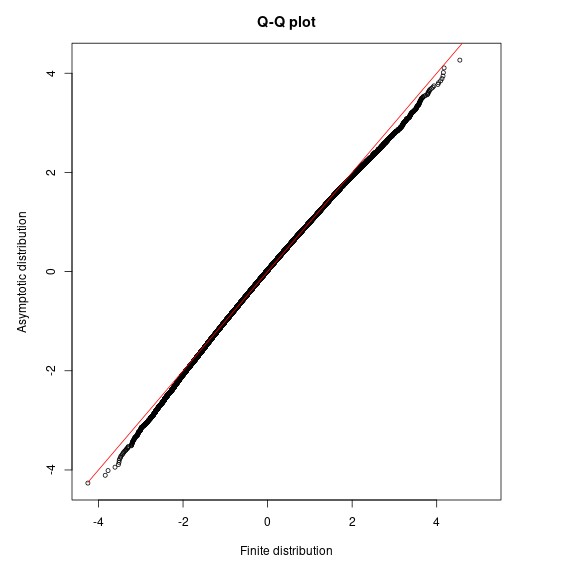}
    \caption*{Erd\H{o}s-R{\'e}nyi with infections, $(n,t)=(15,15)$.}
\end{subfigure}%
\begin{subfigure}{.275\textwidth}
\captionsetup{font=tiny}
\captionsetup{justification=centering}
 \centering
  \includegraphics[width=1\linewidth]{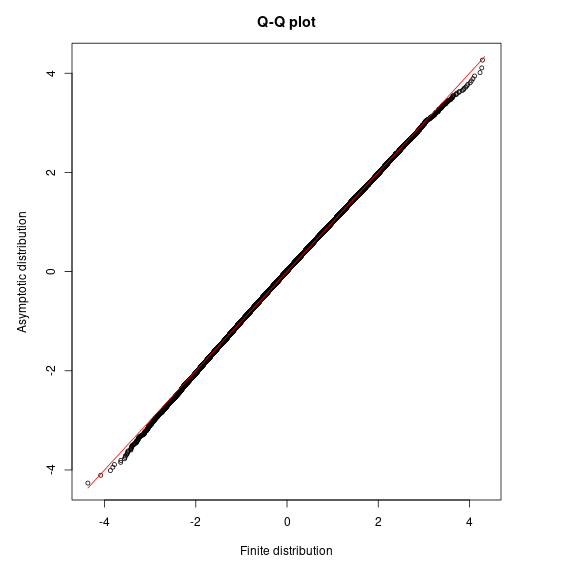}
    \caption*{Erd\H{o}s-R{\'e}nyi with infections, $(n,t)=(50,25)$.}
\end{subfigure}%
\begin{subfigure}{.275\textwidth}
\captionsetup{font=tiny}
\captionsetup{justification=centering}
 \centering
  \includegraphics[width=1\linewidth]{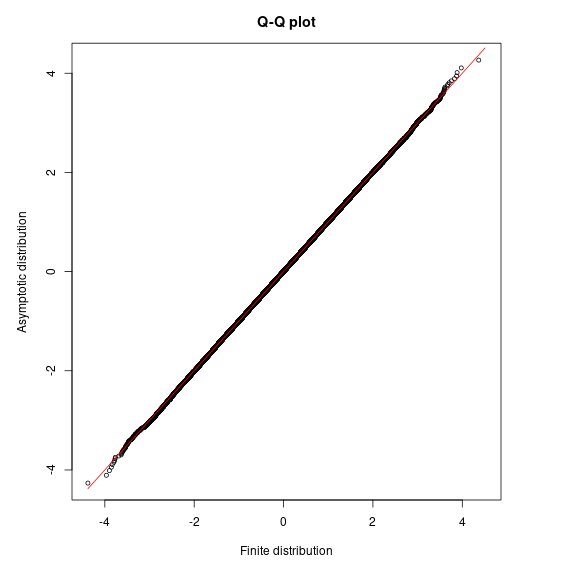}
    \caption*{Erd\H{o}s-R{\'e}nyi with infections, $(n,t)=(200,80)$.}
\end{subfigure}

\hspace{0.75cm}
\begin{subfigure}{.275\textwidth}
\captionsetup{font=tiny}
\captionsetup{justification=centering}
 \centering
  \includegraphics[width=1\linewidth]{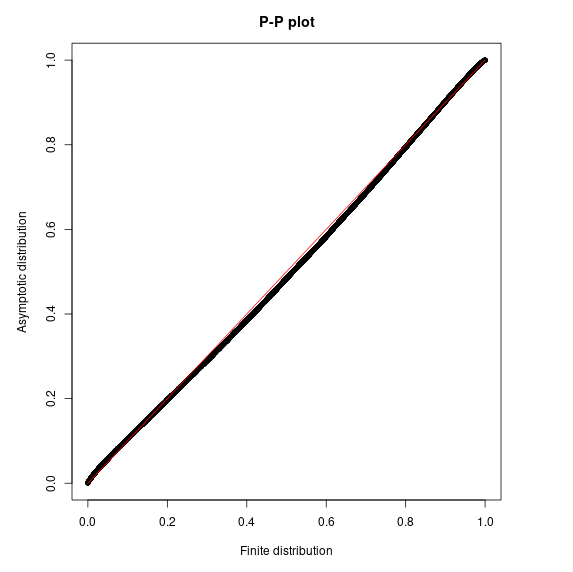}
    \caption*{Contagion case, $(n,t)=(15,15)$.}
\end{subfigure}%
\begin{subfigure}{.275\textwidth}
\captionsetup{font=tiny}
\captionsetup{justification=centering}
 \centering
  \includegraphics[width=1\linewidth]{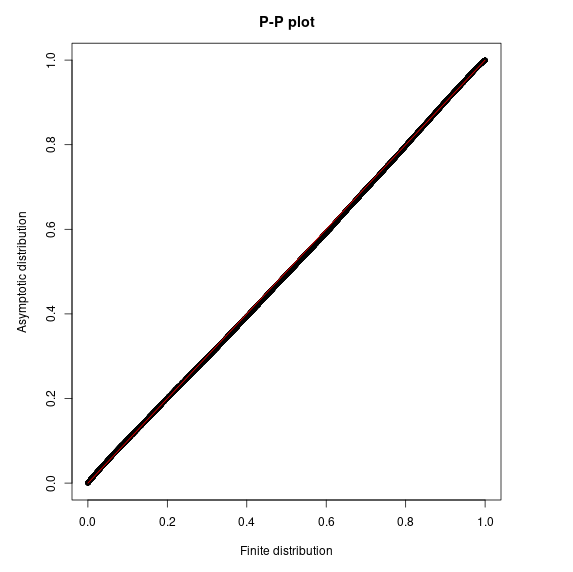}
    \caption*{Contagion case, $(n,t)=(50,25)$.}
\end{subfigure}%
\begin{subfigure}{.275\textwidth}
\captionsetup{font=tiny}
\captionsetup{justification=centering}
 \centering
  \includegraphics[width=1\linewidth]{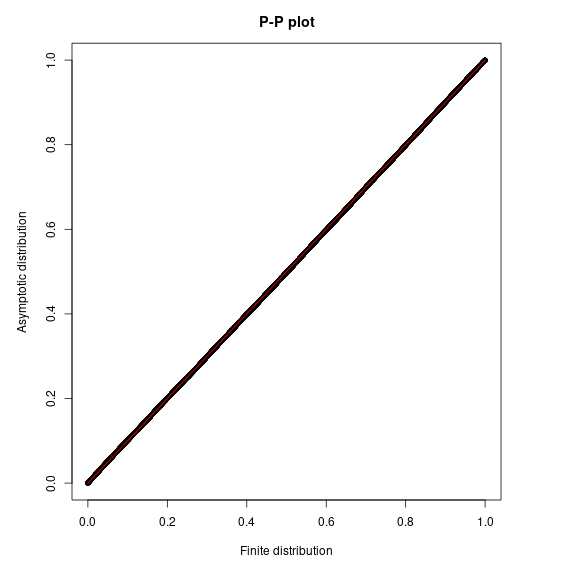}
    \caption*{Contagion case, $(n,t)=(200,80)$.}
\end{subfigure}

\hspace{0.75cm}
\begin{subfigure}{.275\textwidth}
\captionsetup{font=tiny}
\captionsetup{justification=centering}
 \centering
  \includegraphics[width=1\linewidth]{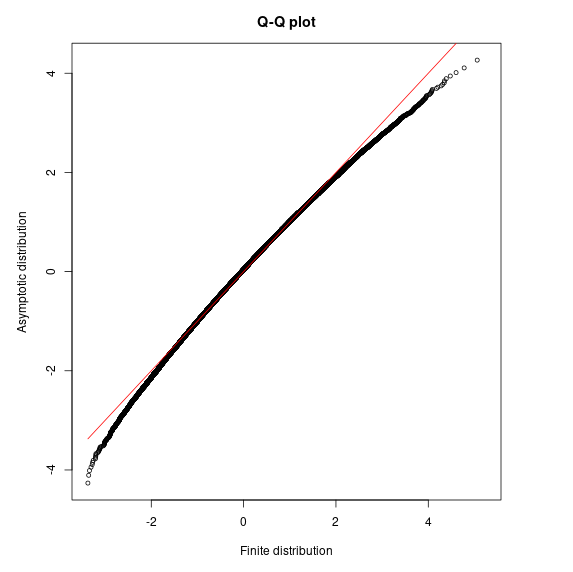}
    \caption*{Contagion case, $(n,t)=(15,15)$.}
\end{subfigure}%
\begin{subfigure}{.275\textwidth}
\captionsetup{font=tiny}
\captionsetup{justification=centering}
 \centering
  \includegraphics[width=1\linewidth]{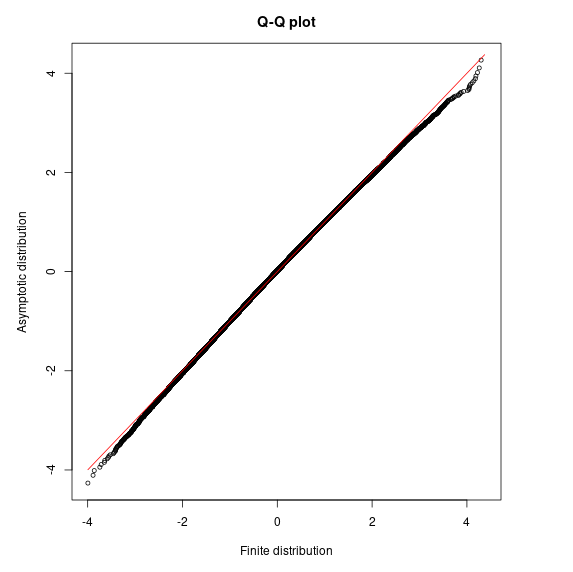}
    \caption*{Contagion case, $(n,t)=(50,25)$.}
\end{subfigure}%
\begin{subfigure}{.275\textwidth}
\captionsetup{font=tiny}
\captionsetup{justification=centering}
 \centering
  \includegraphics[width=1\linewidth]{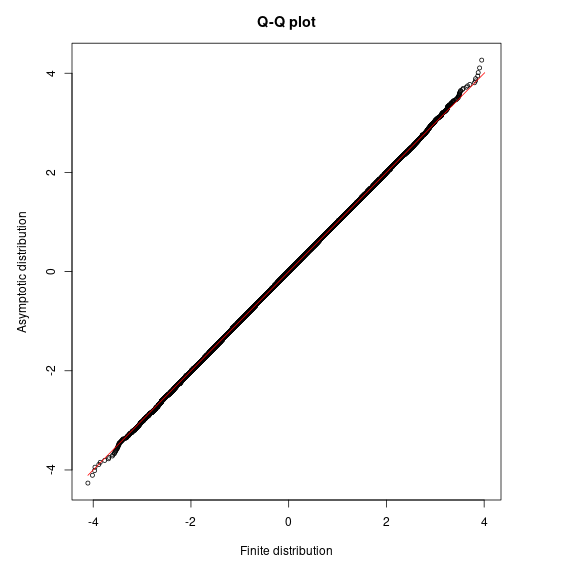}
    \caption*{Contagion case, $(n,t)=(200,80)$.}
\end{subfigure}
\caption{Plots related to Theorem \ref{thm3}.}
  \label{fig4}
\end{figure}

\begin{figure}[H]
\hspace{0.75cm}
\begin{subfigure}{.275\textwidth}
\captionsetup{font=tiny}
\captionsetup{justification=centering}
 \centering
  \includegraphics[width=1\linewidth]{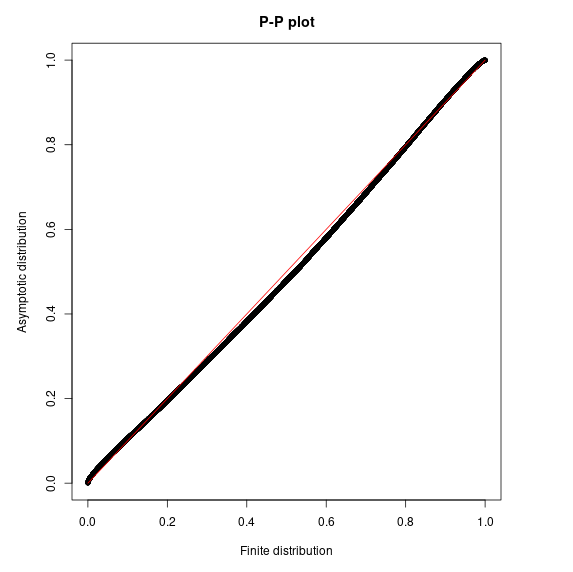}
    \caption*{Erd\H{o}s-R{\'e}nyi with infections, $t=15$.}
\end{subfigure}%
\begin{subfigure}{.275\textwidth}
\captionsetup{font=tiny}
\captionsetup{justification=centering}
 \centering
  \includegraphics[width=1\linewidth]{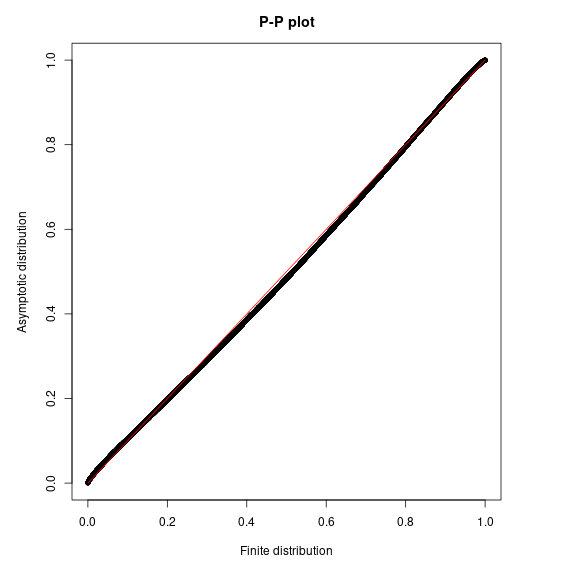}
    \caption*{Erd\H{o}s-R{\'e}nyi with infections, $t=25$.}
\end{subfigure}%
\begin{subfigure}{.275\textwidth}
\captionsetup{font=tiny}
\captionsetup{justification=centering}
 \centering
  \includegraphics[width=1\linewidth]{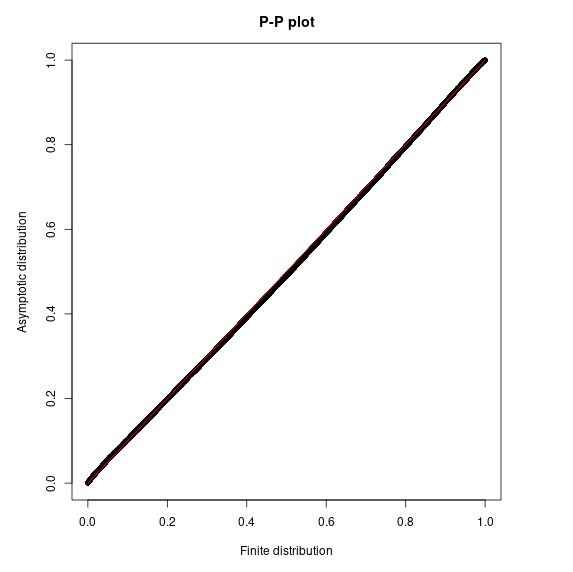}
    \caption*{Erd\H{o}s-R{\'e}nyi with infections, $t=80$.}
\end{subfigure}

\hspace{0.75cm}
\begin{subfigure}{.275\textwidth}
\captionsetup{font=tiny}
\captionsetup{justification=centering}
 \centering
  \includegraphics[width=1\linewidth]{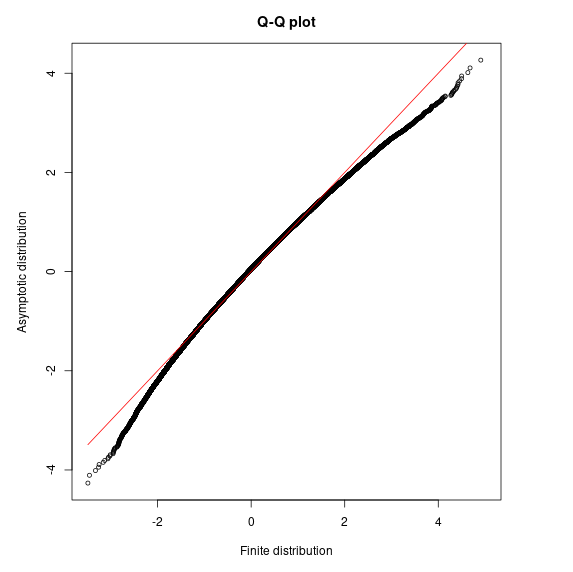}
    \caption*{Erd\H{o}s-R{\'e}nyi with infections, $t=15$.}
\end{subfigure}%
\begin{subfigure}{.275\textwidth}
\captionsetup{font=tiny}
\captionsetup{justification=centering}
 \centering
  \includegraphics[width=1\linewidth]{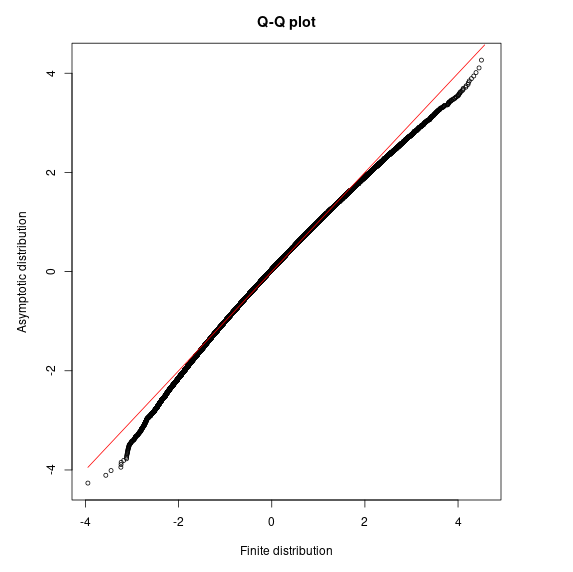}
    \caption*{Erd\H{o}s-R{\'e}nyi with infections, $t=25$.}
\end{subfigure}%
\begin{subfigure}{.275\textwidth}
\captionsetup{font=tiny}
\captionsetup{justification=centering}
 \centering
  \includegraphics[width=1\linewidth]{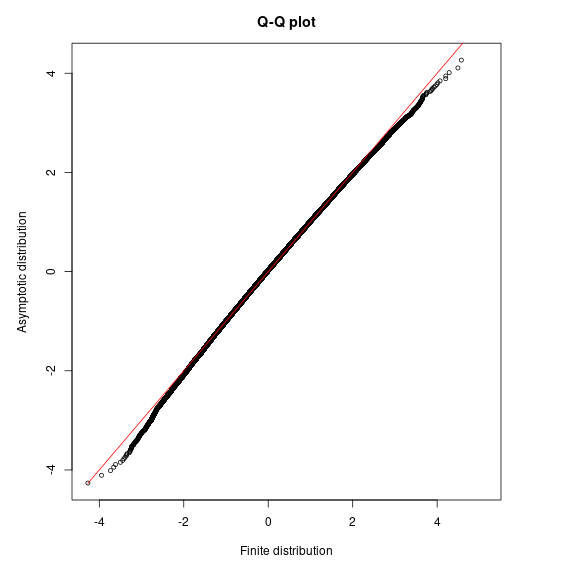}
    \caption*{Erd\H{o}s-R{\'e}nyi with infections, $t=80$.}
\end{subfigure}

\hspace{0.75cm}
\begin{subfigure}{.275\textwidth}
\captionsetup{font=tiny}
\captionsetup{justification=centering}
 \centering
  \includegraphics[width=1\linewidth]{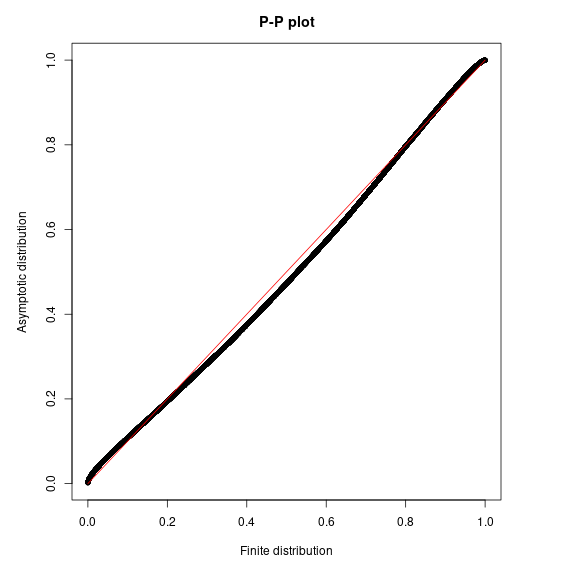}
    \caption*{Contagion case, $t=15$.}
\end{subfigure}%
\begin{subfigure}{.275\textwidth}
\captionsetup{font=tiny}
\captionsetup{justification=centering}
 \centering
  \includegraphics[width=1\linewidth]{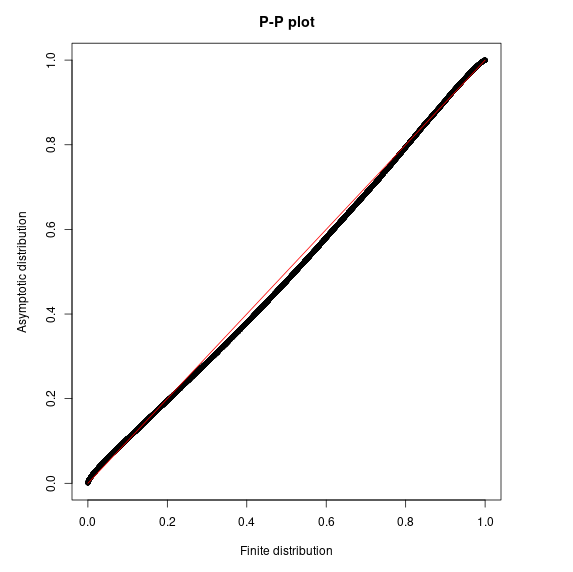}
    \caption*{Contagion case, $t=25$.}
\end{subfigure}%
\begin{subfigure}{.275\textwidth}
\captionsetup{font=tiny}
\captionsetup{justification=centering}
 \centering
  \includegraphics[width=1\linewidth]{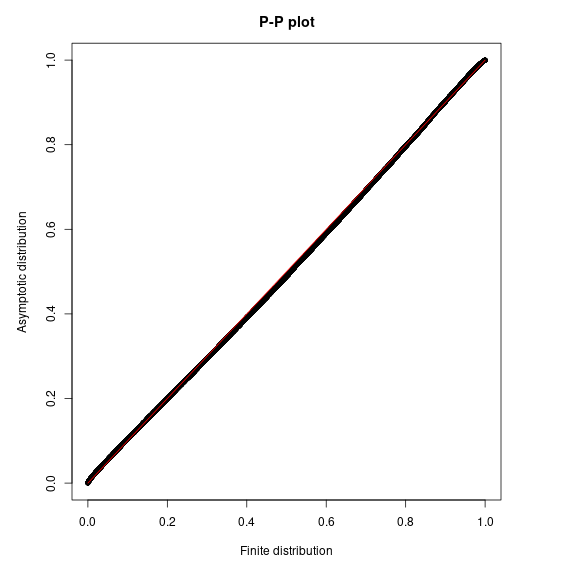}
    \caption*{Contagion case, $t=80$.}
\end{subfigure}

\hspace{0.75cm}
\begin{subfigure}{.275\textwidth}
\captionsetup{font=tiny}
\captionsetup{justification=centering}
 \centering
  \includegraphics[width=1\linewidth]{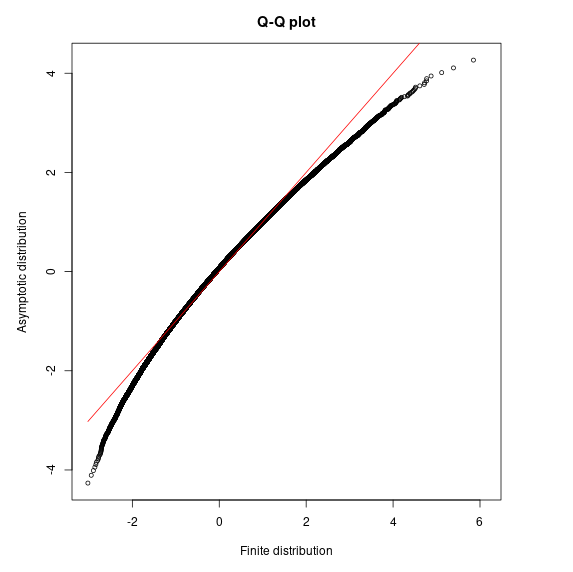}
    \caption*{Contagion case, $t=15$.}
\end{subfigure}%
\begin{subfigure}{.275\textwidth}
\captionsetup{font=tiny}
\captionsetup{justification=centering}
 \centering
  \includegraphics[width=1\linewidth]{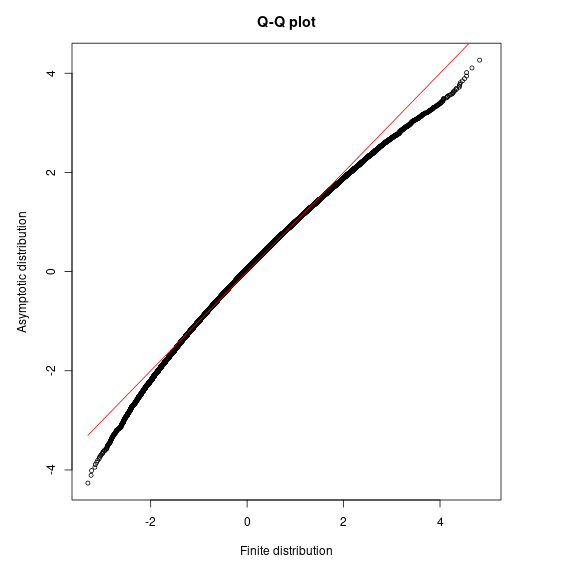}
    \caption*{Contagion case, $t=25$.}
\end{subfigure}%
\begin{subfigure}{.275\textwidth}
\captionsetup{font=tiny}
\captionsetup{justification=centering}
 \centering
  \includegraphics[width=1\linewidth]{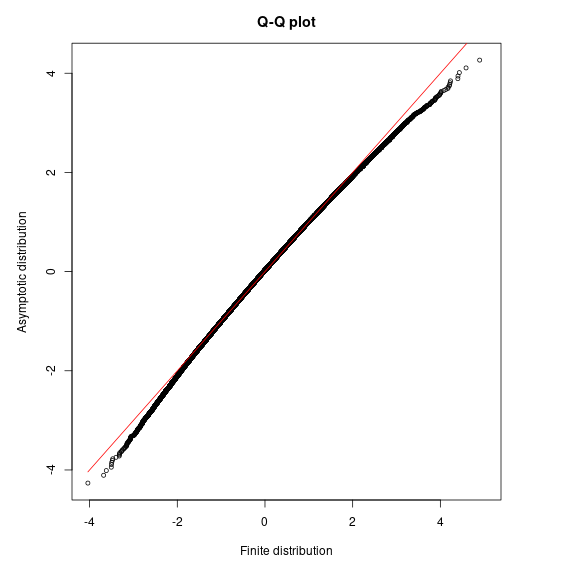}
    \caption*{Contagion case, $t=80$.}
\end{subfigure}
\caption{Plots related to Theorem \ref{thm4b}.}
  \label{fig5}
\end{figure}

\subsection{The impact of interaction}\label{comp}

We numerically compare the interaction-augmented model specified in Section~\ref{imp} with the classical collective risk model. To facilitate a direct comparison with a binomial claim count distribution, as described in Remark~\ref{rmk4}, we consider a single loss event by conditioning on $N(t) \equiv 1$.

Using Monte Carlo simulations and asymptotic approximations from Section~\ref{theory}, we compare the model in Section~\ref{imp} to two alternative settings. The setup in Section~\ref{imp} is referred to as the \emph{dependent} case, where infections and transmissions $I$ follow either the Erd\H{o}s-R{\'e}nyi model with infections (Example~\ref{exa1}~d)) or the contagion model (Example~\ref{exa1}~e)). Positively dependent conditional loss expenditures $Z$ are structured according to equation~\eqref{lossexa}. Numerical results are presented in Table~\ref{tabconn} and Table~\ref{tabcont}. In both cases, we compare the model from Section~\ref{imp} to a corresponding \emph{independent} case and a \emph{standard} case, where the standard case remains identical across both tables—apart from minor differences due to Monte Carlo sampling.

The \emph{independent} case uses the same model for infections and transmissions $I$ as the dependent case. Independence pertains to the conditional loss expenditures $Z$, which follow a modified version of equation~\eqref{lossexa}. Specifically, we set
\begin{align*}
Z_{i,j} = \widetilde Z_{i,j} + \varepsilon_{i,j}
\end{align*}
with independent $\widetilde Z_{i,j} \overset{d}{=}\widetilde Z_j$, $i,j \in \mathbb{N}$. Thus, the contagious components $\widetilde Z_j$, $j \in \mathbb{N}$, are replaced by idiosyncratic loss expenditures with the same distribution. This can be interpreted as a collective model according to Theorem~\ref{thmrel}, where the claims count process is governed by the infections and transmissions $I$. The \emph{standard} case further modifies the matrix $I$ in line with Example~\ref{exa2}~a), removing all transmissions and assuming independent infections. The corresponding claims count process follows a binomial distribution, as described in Remark~\ref{rmk4}.

The numerical results of the comparison, rounded to one decimal place, are presented in Table~\ref{tabconn} for the Erd\H{o}s-R{\'e}nyi model with infections and in Table~\ref{tabcont} for the contagion model. For the comparison, we focus on the upper quantiles of the distribution of the total loss of the insurance company, $S'(n,1)$, which are particularly relevant in risk management. Specifically, we report the quantiles $q_{0.5}$, $q_{0.75}$, $q_{0.95}$, and $q_{0.995}$, corresponding to probability levels $0.5$, $0.75$, $0.95$, and $0.995$, respectively. The latter level is of particular importance in the context of the Solvency II directive. Additionally, we provide the associated quantile differences $q_{0.75} - q_{0.5}$, $q_{0.95} - q_{0.5}$, and $q_{0.995} - q_{0.5}$ to capture the dispersion in the tail of the distribution. Results are displayed for portfolio sizes $n = 15, 50, 200$, considering the dependent case, the independent case, and the standard case as defined above. To approximate the quantiles, we employ two methods: first, Monte Carlo simulations based on 100,000 replications in each case; second, asymptotic approximations derived from Theorem~\ref{thm2}~a), as discussed in Remark~\ref{rmk2}.

Comparing the Erd\H{o}s-R{\'e}nyi model with infections in Table~\ref{tabconn} to the contagion model in Table~\ref{tabcont}, we observe that, in both the dependent and independent cases, the quantiles are larger in the Erd\H{o}s-R{\'e}nyi model. This is primarily due to the fact that contagion in the contagion model occurs only conditionally on an infection, whereas in the Erd\H{o}s-R{\'e}nyi model, transmissions directly trigger loss occurrences.

We now focus on observations within each of the two tables, which exhibit qualitatively similar patterns. In both the dependent and independent cases, the medians (quantiles at level 0.5) are of comparable magnitude. However, the difference between the quantiles in the dependent case and their counterparts in the independent case increases as the quantile level rises. Consequently, the quantile distances also increase with higher quantile levels. This behavior is expected, as the dependent case models positive correlations between the losses incurred by different policyholders, leading to greater risk exposure (i.e., higher upper quantiles relative to the median). In contrast, the independent case assumes no dependencies and zero correlations between conditional losses, resulting in lower risk levels.

The portfolio size naturally influences the number of infections and transmissions differently: infections increase linearly in $n$, while transmissions grow quadratically in $n$ in the Erd\H{o}s-R{\'e}nyi model. In the contagion model, these effects are intertwined, adding further complexity. Additionally, diversification effects are present, influencing loss dynamics as portfolio size increases. These combined factors contribute to the overall increase in total losses for larger portfolios.

All findings hold consistently for both the Erd\H{o}s-R{\'e}nyi model (Table~\ref{tabconn}) and the contagion model (Table~\ref{tabcont}). In the standard case, where neither contagious losses nor transmissions occur, quantiles are significantly smaller.

\renewcommand{\arraystretch}{2.5}
		\begin{table}[H]
			\begin{center}
\begin{footnotesize}
				\setlength{\tabcolsep}{3pt}
				\begin{tabular}{cccccccccc}
$n$ & Case & Method & $q_{0.5}$ &  $q_{0.75}$ &  $q_{0.95}$ &  $q_{0.995}$ & $q_{0.75}-q_{0.5}$ & $q_{0.95}-q_{0.5}$ & $q_{0.995}-q_{0.5}$\\
\hline
 \multirow{6}{*}{15} & \multirow{2}{*}{Dependent} & Monte-Carlo & 189.8 & 219.0  & 267.8 & 323.7 & 29.2 & 78.0 & 133.9\\
  & & Asymptotic & 193.8 & 221.8  & 261.9 & 300.4 & 27.9 & 68.1 & 106.6\\
\cline{2-10}
 &   \multirow{2}{*}{Independent} & Monte-Carlo  &  192.9 & 213.0 & 243.4 & 274.2 & 20.1 & 50.5 & 81.3\\
  & & Asymptotic & 193.8 & 213.6  &  242.1 & 269.4 & 19.8 & 48.2 & 75.5\\
\cline{2-10}
 &  \multirow{2}{*}{Standard}& Monte-Carlo  &  8.8 & 12.7 & 19.4 & 26.9 & 3.9 & 10.6 & 18.2\\
  & & Asymptotic & 9.4 & 13.1  & 18.3 & 23.4 & 3.7 &  8.9 & 14.0\\
\hline
 \multirow{6}{*}{50} &  \multirow{2}{*}{Dependent}  & Monte-Carlo  &  2172.1 & 2317.1  & 2544.8 & 2785.7& 144.9 & 372.7 & 613.6\\
  & & Asymptotic & 2183.1 &  2324.4  &  2527.6 & 2722.6 & 141.3 & 344.5 & 539.5\\
\cline{2-10}
 &   \multirow{2}{*}{Independent}  & Monte-Carlo  &  2182.9 & 2249.7 & 2347.5 & 2440.0 & 66.8 & 164.7 & 257.2\\
  & & Asymptotic & 2183.1 & 2249.8  & 2345.8 & 2438.0 & 66.7 & 162.7 & 254.8\\
\cline{2-10}
 &  \multirow{2}{*}{Standard}  & Monte-Carlo  & 30.7 & 37.7 & 48.8 &  60.4 & 7.0 &  18.1 & 29.7\\
  & & Asymptotic & 31.4 & 38.0  & 47.6 & 56.9 & 6.68 & 16.3 & 25.5\\
\hline
 \multirow{6}{*}{200} & \multirow{2}{*}{Dependent}  & Monte-Carlo  &  35042.9 & 36098.0  & 37701.4 & 39299.6 & 1055.1 & 2658.5 & 4256.7\\
  & & Asymptotic & 35080.4 & 36123.4  & 37623.9 & 39063.4 & 1043.0 & 2543.5 & 3983.0\\
\cline{2-10}
 &  \multirow{2}{*}{Independent}   & Monte-Carlo  & 35080.4 & 35348.4 & 35734.5 & 36112.9 & 268.0 & 654.2 & 1032.6\\
  & & Asymptotic & 35080.4 & 35348.3  &  35733.8
 & 36103.6 &  267.9 & 653.4 & 1023.2\\
\cline{2-10}
 & \multirow{2}{*}{Standard}  & Monte-Carlo  &  124.8 & 138.4 & 159.1 & 180.2 & 13.6 & 34.3 & 55.4\\
  & & Asymptotic & 125.4 & 138.8 & 158.0 & 176.5 & 13.5 & 32.6 &  51.0
	\end{tabular}
				\caption{\label{tabconn} Quantiles and related distances for the Erd\H{o}s-R{\'e}nyi model with infections.}
\end{footnotesize}
			\end{center}
		\end{table}

\renewcommand{\arraystretch}{2.5}
		\begin{table}[H]
			\begin{center}
\begin{footnotesize}
				\setlength{\tabcolsep}{3pt}
				\begin{tabular}{cccccccccc}
$n$ & Case & Method & $q_{0.5}$ &  $q_{0.75}$ &  $q_{0.95}$ &  $q_{0.995}$ & $q_{0.75}-q_{0.5}$ & $q_{0.95}-q_{0.5}$ & $q_{0.995}-q_{0.5}$\\
\hline
 \multirow{6}{*}{15} &  \multirow{2}{*}{Dependent} & Monte-Carlo & 51.1 & 74.2  & 114.8 & 162.5 & 23.1 & 63.8 & 111.5\\
 & & Asymptotic & 55.5 & 77.2  & 108.5 & 138.5 & 21.7 & 53.0 & 82.9\\
\cline{2-10}
 &   \multirow{2}{*}{Independent} & Monte-Carlo & 53.4 & 73.2 & 104.4 & 136.3 & 19.9 & 51.1 & 83.0\\
 & & Asymptotic & 55.5 & 78.8  & 112.2 & 144.3 & 23.3 & 56.7 & 88.8\\
\cline{2-10}
 &  \multirow{2}{*}{Standard} & Monte-Carlo & 8.7 &  12.6 &  19.3 & 27.0 & 3.9 & 10.6 & 18.2\\
 & & Asymptotic & 9.4 & 13.1  & 18.3 & 23.4 & 3.7 & 8.9 & 14.0\\
\hline
 \multirow{6}{*}{50} &  \multirow{2}{*}{Dependent} & Monte-Carlo & 556.6 & 679.3  & 874.8 & 1090.9 & 122.7 & 318.2 & 534.3\\
 & & Asymptotic &  569.3 & 686.7  & 855.6 & 1017.6 & 117.4 & 286.3 & 448.3\\
\cline{2-10}
 &   \multirow{2}{*}{Independent} & Monte-Carlo & 564.9 &  665.0 & 816.1 & 974.1 & 100.1 & 251.1 & 409.1\\
 & & Asymptotic & 569.3 & 678.1 & 834.7  &  984.8 & 108.8 &  265.4 & 415.5 \\
\cline{2-10}
 &  \multirow{2}{*}{Standard} & Monte-Carlo & 30.7 &  37.7 & 48.8 & 60.2 & 7.0 & 18.1 & 29.5\\
 & & Asymptotic & 31.4 & 38.0  & 47.6 & 56.9 & 6.8 &  16.3 & 25.5\\
\hline
 \multirow{6}{*}{200} &  \multirow{2}{*}{Dependent} & Monte-Carlo & 8813.2 & 9729.3  & 11113.5 & 12531.3 & 916.2 & 2300.4 & 3718.1\\
 & & Asymptotic & 8864.2 & 9762.6  & 11055.1& 12295.2 &  898.4 & 2190.9 & 3431.0\\
\cline{2-10}
 &   \multirow{2}{*}{Independent} & Monte-Carlo  & 8844.1 & 9589.9 & 10687.7 & 11772.8 & 745.7 & 1843.6 & 2928.6\\
 & & Asymptotic & 8864.2 & 9628.8  & 10728.8 & 11784.1
 &  764.6 & 1864.6 & 2919.9\\
\cline{2-10}
 &  \multirow{2}{*}{Standard} & Monte-Carlo & 124.7 & 138.3 & 159.0 & 180.1 & 13.5 & 34.3&  55.3\\
 & & Asymptotic & 125.4 & 138.8 & 158.0 & 176.5 & 13.4 & 32.6 & 51.0
	\end{tabular}
				\caption{\label{tabcont} Quantiles and related distances for the contagion model.}
\end{footnotesize}
			\end{center}
		\end{table}

\section{Conclusion and outlook}\label{con}

In an insurance context, we extended collective models to incorporate interactions in loss occurrences and expenditures. The key feature of our approach is the joint exchangeability of arrays (which is weaker than de Finetti's exchangeability), which allows us to move beyond the assumption of independence. Within this framework, we established limit theorems as the number of insurance contracts, the time horizon, or both tend to infinity. Furthermore, we rigorously examined the relationship to the classical collective model. To support our findings, we implemented the model in the statistical software R and conducted simulation studies to assess the accuracy of the approximations and evaluate the impact of network interactions on total loss distributions.

From a more practical perspective, building on \cite{zs22}, the model could be extended to incorporate additional covariates, thereby enhancing its applicability. Such extensions would enable its use in insurance pricing and risk management, particularly in the context of cyber and power networks. On the theoretical side, further research directions include investigating rates of convergence (Berry-Esseen type results) and assessing the applicability of finite sample corrections to improve approximation accuracy. Additionally, statistical methodologies such as parameter estimation and goodness-of-fit assessment (i.e., model calibration and validation) based on loss data should be explored both theoretically and in practical applications.

Beyond these extensions, several fundamental challenges remain. A key question is the development of interaction mechanisms beyond joint exchangeability. Another direction is incorporating explicit time dependence into the interaction structure, allowing for evolving networks where connections form and dissolve dynamically, leading to more realistic contagion models. Furthermore, extending the framework from univariate to multivariate loss processes would enable the joint modeling of different risk types, such as operational, cyber, and financial losses, along with their interdependencies. Finally, systemic risk modeling remains a major challenge, as losses may propagate not only within a single insurance portfolio but also across multiple insurers and financial institutions, requiring systemic contagion models that account for inter-firm dependencies, capital constraints, and regulatory spillover effects.

\begin{appendix}

\section{Proofs}\label{proofs}

\begin{proof}[Proof of Theorem \ref{thm1}]
By Assumptions~\ref{a1}--\ref{a4}, the random variables $L_1(n), L_2(n), \dots$ are independent and identically distributed and independent of the counting process $N(t)$. Applying Wald's equation for expectation, we obtain
\begin{align*}
\E\big(S(n,t)\big) = \E\big(N(t)\big) \E(L_1(n)).
\end{align*}
Using the definition of $L_1(n)$ and linearity of expectation, it follows that
\begin{align*}
\E(L_1(n)) = \sum_{i,j=1}^n \E(G_{i,j}) = n\E(G_{1,1}) + n(n-1)\E(G_{1,2}),
\end{align*}
which establishes the formula for the expectation.

For the variance, we apply Wald’s equation for variance:
\begin{align*}
\Var\big(S(n,t)\big) = \Var\big(N(t)\big) \E(L_1(n))^2 + \E\big(N(t)\big) \Var(L_1(n)).
\end{align*}
To compute $\Var(L_1(n))$, we expand
\begin{align*}
\Var\big(L_1(n)\big) &= \sum_{i,j,i',j'=1}^n \Cov\big(G_{i,j,1},G_{i',j',1}\big) \\
&= n\Var(G_{1,1}) \\
&\quad + n(n-1) \big( 2\Cov(G_{1,1},G_{1,2}) + 2\Cov(G_{1,1},G_{2,1}) \\
&\quad\quad + \Var(G_{1,2}) + \Cov(G_{1,2},G_{2,1}) \big) \\
&\quad + n(n-1)(n-2) \big( \Cov(G_{1,2}, G_{1,3}) + \Cov(G_{1,2}, G_{3,1}) \\
&\quad\quad + \Cov(G_{2,1}, G_{1,3}) + \Cov(G_{2,1}, G_{3,1}) \big).
\end{align*}
\end{proof}

\begin{proof}[Proof of Theorem \ref{thm2}.]
a) For arbitrary $m \in \mathbb{N}_0$ and $1 \leq i , j \leq n$, $i \neq j$,  define
\begin{align*}
\widetilde X_{i,j}(m) &= X_{i,j}'(m) + X_{j,i}'(m) - 2m\E(G_{1,2}), \\
\widetilde X_{i,i}(m) &= X_{i,i}'(m) - m\E(G_{1,1}).
\end{align*}
By Assumptions~\ref{a1}--\ref{a4}, the array $\widetilde X(m) = (\widetilde X_{i,j}(m))_{(i,j) \in \mathbb{N} \times \mathbb{N}}$ is centered, symmetric, jointly exchangeable, and dissociated. Moreover, we can decompose
\begin{align*}
S'(n,m) - m\mu_L(n) = U(n,m) + R(n,m),
\end{align*}
where
\begin{align*}
U(n,m) &= \sum_{1 \leq i < j \leq n} \widetilde X_{i,j}(m), \\
R(n,m) &= \sum_{i=1}^{n} \widetilde X_{i,i}(m).
\end{align*}
By Theorem A in \citet{sil}, the statistic $U(n,m)$ satisfies
\begin{align*}
\frac{U(n,m)}{n^{ 3/ 2 }}\overset{d}{\longrightarrow}N(0,m\tau^2)~\text{as}~n\to\infty,
\end{align*}
where 
\begin{align*}
\tau^2=\Cov(G_{1,2}, G_{1,3})+\Cov(G_{1,2},  G_{3,1})+\Cov(G_{2,1}, G_{1,3})+\Cov(G_{2,1},  G_{3,1}).
\end{align*}
Furthermore, we have
\begin{align*}
\frac{n^{ 3/ 2 }}{\sigma_L(n)}\longrightarrow\frac 1 \tau~\text{as}~n\to\infty,
\end{align*}
By the  strong law of large numbers,
\begin{align*}
\frac{R(n,m)}{n}\overset{a.s.}{\longrightarrow}0~\text{as}~n\to\infty.
\end{align*}
Finally, from 
\begin{align*}
\frac{S'(n,m)-m \mu_L(n)}{\sigma_L(n)}&= \frac{n^{ 3/ 2 }}{\sigma_L(n)} \frac{S'(n,m)-m \mu_L(n)}{n^{ 3/ 2 }}\\
&= \frac{n^{ 3/ 2 }}{\sigma_L(n)}\frac{U(n,m)}{n^{ 3/ 2 }}+\frac{n^{ 3/ 2 }}{\sigma_L(n)}\frac{R(n,m)}{n}
\end{align*}
Slutsky's theorem implies that for all $m\in\N_0$,
\begin{align*}
 \frac{S'(n,m)-m \mu_L(n)}{\sigma_L(n)}\overset{d}{\longrightarrow}N(0,m)~\text{as}~n\to\infty.
\end{align*}

b) For any fixed $x\in\R$, we define, for  $m\in\N_0$,
\begin{align*}
h_n(m)=P\Bigg( \frac{S'(n,N(t))-N(t) \mu_L(n)}{\sigma_L(n)}\le x\Bigg|N(t)=m\Bigg), \quad
h(m)=\Phi_{0,m}(x).
\end{align*}
Part a) combined with the Portmanteau theorem yields 
$h_n(m)\longrightarrow h(m)$ as  $n\to\infty$.
Observing
\begin{align*}
P\Bigg(\frac{S'(n,N(t))-N(t) \mu_L(n)}{\sigma_L(n)}\le x\Bigg)&=\E\Bigg(P\Bigg(\frac{S'(n,N(t))-N(t) \mu_L(n)}{\sigma_L(n)}\le x\Bigg|N(t)\Bigg)\Bigg)\\
&=\E\Big(h_n\big(N(t)\big)\Big),
\end{align*} 
it follows from the dominated convergence theorem, since $h_n\ge 0$ and $h_n$ is bounded by 1, that 
\begin{align*}
\E\Big(h_n\big(N(t)\big)\Big)\longrightarrow \E\big(h(N(t))\big)=\sum_{m=0}^\infty P(N(t)=m)\Phi_{0,m}(x)~\text{as}~n\to\infty.
\end{align*}
\end{proof}

\begin{proof}[Proof of Theorem \ref{thm3}.]
a) We show the statement for an arbitrary sequence $m(n)\in(0,\infty)$, $n\in\N$, with $\lim_{n\to\infty}m(n)=\infty$. It is 
\begin{align*}
\frac{ S'(n,m(n))-m(n)\mu_L(n)}{\sqrt{m(n)}\sigma_L(n)}&=\frac{\sum_{k=1}^{m(n)}L_{k}(n)-m(n)\mu_L(n)}{\sqrt{m(n)}\sigma_L(n)}\\
&=\frac{1}{\sqrt{m(n)}}\sum_{k=1}^{m(n)}\frac{ L_k(n)-\mu_L(n)}{\sigma_L(n)}\\
&=\frac{S(n)}{s(n)},
\end{align*}
where 
\begin{align*}
S(n)=\sum_{k=1}^{m(n)}H_k(n),~s^2(n)=\Var(S(n))=m(n)
\end{align*}
with the triangular array of standardized (centered and normalized) independent and identically distributed random variables
\begin{align*}
H_k(n)=\frac{ L_k(n)-\mu_L(n)}{\sigma_L(n)},~k=1,\dots,m(n).
\end{align*}
We  will apply the central limit theorem of Lindeberg-Feller. For this purpose, we verify Lindeberg's condition. We have  for all $u\in(0,\infty)$ 
\begin{align*}
\frac{1}{s(n)^2}\sum_{k=1}^{m(n)}E\big(H_k(n)^2I(|H_k(n)|>s(n) u)\big)=E\big(H_1(n)^2I(|H_1(n)|>\sqrt{m(n)} u)\big).
\end{align*}
Applying again Theorem A in \cite{sil}, we obtain analogously to the proof of Theorem \ref{thm2}
\begin{align*}
H_1(n)\overset{d}{\longrightarrow}\eta~\text{as}~n\to\infty.
\end{align*}
where $\eta\sim N(0,1)$. By the Skorokhod representation theorem, we may replace $(H_1(n))_n$ and $\eta$ by new random variables (with notation unchanged) on a common probability space such that  $H_1(n)\overset{a.s.}{\to}\eta$ as $n\to\infty$. From 
\begin{align*}
H_1(n)^2I(|H_1(n)|>\sqrt{m(n)} u)\overset{a.s.}{\longrightarrow} 0~\text{as}~n\to\infty
\end{align*}
and $H_1(n)^2I(|H_1(n)|>\sqrt{m(n)} u)\le H_1(n)^2$ as well as $E(H_1(n)^2)=1$ it follows from the dominated convergence theorem that
\begin{align*}
E\big(H_1(n)^2I(|H_1(n)|>\sqrt{m(n)} u)\big)\longrightarrow 0~\text{as}~n\to\infty,
\end{align*}
i.e., Lindeberg's condition. The central limit theorem of Lindeberg-Feller implies
\begin{align*}
\frac{S(n)}{s(n)} \overset{d}{\longrightarrow}N(0,1)~\text{as}~n\to\infty.
\end{align*}

b)  It is sufficient to show the statement for an arbitrary sequence $t_n\in(0,\infty)$, $n\in\N$, with $\lim_{n\to\infty}t_n=\infty$. Due to Assumption~\ref{a4} we may assume without loss of generality that the random variables  $L_{1}(n),L_{2}(n),\dots$ on the one hand and the  process $N=(N(t))_{t\in[0,\infty)}$ on the other hand are defined on different probability spaces $(\Omega_1,\AA_1,P_1)$ and  $(\Omega_2,\AA_2,P_2)$, where  $(\Omega,\AA,P)= (\Omega_1\times\Omega_2,\AA_1\otimes\AA_2,P_1\otimes P_2)$ is endowed with the product measure. We fix an arbitrary $\omega\in\{\lim_{t\to\infty}N(t)=\infty\}\in\AA_2$ and set $m(n)=N(t_n)(\omega)$; $\lim_{n\to\infty}m(n)=\infty$ is satisfied. In addition, we fix arbitrary $x\in\R$. Setting
\begin{align*}
h_n(\omega)=P_1\Bigg(\frac{ S'(n,N(t_n)(\omega))-N(t_n)(\omega)\mu_L(n)}{\sqrt{N(t_n)(\omega)}\sigma_L(n)}\le x \Bigg) =P_1\Bigg(\frac{ S'(n,m(n))-m(n)\mu_L(n)}{\sqrt{m(n)}\sigma_L(n)}\le x \Bigg),
\end{align*}
and 
$
h(\omega)=\Phi_{0,1}(x)
$
constant in $\omega$, it follows from 
 part a) that $ h_n(\omega)\rightarrow h(\omega)~\text{as}~n\to\infty$. Because $h_n\ge 0$ and $h_n$ is bounded by 1, it follows from the dominated convergence theorem that 
\begin{align*}
&P\Bigg(\frac{ S'(n,N(t_n))-N(t_n)\mu_L(n)}{\sqrt{N(t_n)}\sigma_L(n)}\le x \Bigg)\\
&=\int h_n(\omega)dP_2(\omega)\\
&=\int h_n(\omega)I\Big(\omega\in\Big\{\lim_{t\to\infty}{N(t)}=\infty\Big\}\Big)dP_2(\omega)\\
&\longrightarrow \int h(\omega)I\Big(\omega\in\Big\{\lim_{t\to\infty}{N(t)}=\infty\Big\}\Big)dP_2(\omega)\\
&=\Phi_{0,1}(x)~\text{as}~n\to\infty,
\end{align*}
observing that our additional assumption  implies $P_2(\{\lim_{t\to\infty}{N(t)}=\infty\})=1$. This completes the proof of part b).
\end{proof}

\begin{proof} [Proof of Theorem \ref{thm4}.]

a)  From the expression
\begin{align*}
S'(n,m)=\sum_{k=1}^{N(t)}L_{k}(n),
\end{align*}
where $L_{1}(n),L_{2}(n),\dots$ are independent and identically distributed real-valued random variables with expectation $\mu_L(n)$ and variance $\sigma_L^2(n)$,
we have
\begin{align*}
\frac{ S'(n,m)-m\mu_L(n)}{\sqrt{m}\sigma_L(n)} =\frac{ \sum_{k=1}^{N(t)}L_{k}(n)-m\mu_L(n)}{\sqrt{m}\sigma_L(n)},
\end{align*}
so the statement in a) follows immediately by the application of the central limit theorem of Lindeberg-Levy.

b) 
This follows immediately from part a) and with analogous arguments as in the proof of Theorem \ref{thm3} b). 
\end{proof}

\begin{proof} [Proof of Theorem \ref{thm4b}.]

a) We recall the expression for the loss of the insurance company:
\begin{align*}
S(n,t)=\sum_{k=1}^{N(t)}L_{k}(n), \quad
L_{k}(n)=\sum_{i,j=1}^nG_{i,j,k}.
\end{align*}
The expectation and variance are
\begin{align*}
\mu_S(n,t) = & \; \E\big(S(n,t)\big)=\mu_N(t) \mu_L(n)=\lambda t \mu_L(n)\\
\sigma^2_S(n,t)  = & \; \Var\big(S(n,t)\big)=\sigma^2_N(t)\mu_L(n)^2+\mu_N(t)\sigma^2_L(n)=\lambda t \big(\mu_L(n)^2+\sigma^2_L(n)\big).
\end{align*}
We can write  
$
S(n,t)-\mu_S(n,t)=\widetilde S(n,t)+R(n,t)
$
with the telescopic sum 
$
\widetilde S(n,t)=\sum_{k=1}^{\lfloor t \rfloor}H_{k}(n)
$
of the increments
\begin{align*}
H_{k}(n)=\sum_{k'=1}^{N(k)}L_{k'}(n)-\sum_{k'=1}^{N(k-1)}L_{k'}(n)-\lambda\mu_L(n), \quad k=1,\dots,\lfloor t \rfloor, 
\end{align*}
and with the remainder term
\begin{align*}
R(n,t)=\sum_{k= N(\lfloor t \rfloor)+1 }^{N(t)}L_{k}(n)-\lambda( t-\lfloor t \rfloor)\mu_L(n).
\end{align*}
Due to Assumptions \ref{a1}--\ref{a4}, and in particular the fact that the counting process $N = (N(t))_{t\in[0,\infty)}$ is a homogeneous Poisson process with intensity $\lambda \in (0,\infty)$, we observe that the increments $H_{1}(n), \dots, H_{\lfloor t \rfloor}(n)$ form a sequence of independent and identically distributed centered random variables. Setting
$
\sigma^2_{H}(n)=\Var\big(H_{1}(n)\big),
$
it follows from the central limit theorem of Lindeberg-Levy that 
\begin{align*}
\frac{\widetilde S(n,t)}{\sqrt{\lfloor t \rfloor}\sigma_{H}(n)} \overset{d}{\longrightarrow}N(0,1)~\text{as}~t\to\infty.
\end{align*}
Moreover, it is
\begin{align*}
\sigma^2_{\widetilde S}(n,t)=\Var\big(\widetilde S(n,t)\big)={\lfloor t \rfloor}\Var\big(H_{1}(n)\big)={\lfloor t \rfloor}\sigma^2_{H}(n).
\end{align*}
Due to Assumptions \ref{a1}--\ref{a4}, and in particular the fact that the counting process $N = (N(t))_{t\in[0,\infty)}$ is a homogeneous Poisson process with intensity $\lambda \in (0,\infty)$, the random variables $\widetilde S(n,t)$ and $R(n,t)$ are independent.
For that reason, we have
\begin{align*}
\sigma^2_S(n,t)=\Var\big( S(n,t)-\mu_S(n,t)\big)=\sigma^2_{\widetilde S}(n,t)+\Var\big(R(n,t)\big),
\end{align*}
and, due to the fact that the distribution of $R(n,t)$ is periodic in $t$ with period $1$, it follows that
\begin{align*}
\sigma^2_{\widetilde S}(n,t)\sim\sigma^2_{ S}(n,t)~\text{as}~t\to\infty.
\end{align*}
This implies
\begin{align*}
{\lfloor t \rfloor}\sigma^2_{H}(n)\sim{\sigma^2_{ S}(n,t)}~\text{as}~t\to\infty.
\end{align*}
Furthermore, again due to the fact that the distribution of $R(n,t)$ is periodic in $t$ with period $1$, we have 
$
\frac{R(n,t)}{\sqrt{\lfloor t \rfloor}} \overset{P}{\longrightarrow}0$ as $t\to\infty$.
In summary, Slutsky's theorem implies
\begin{align*}
\frac{S(n,t)-\mu_S(n,t)}{\sigma_S(n,t)}=\frac{\sqrt{\lfloor t \rfloor}\sigma_{H}(n)}{{\sigma_{ S}(n,t)}}\bigg(\frac{\widetilde S(n,t)}{\sqrt{\lfloor t \rfloor}\sigma_{H}(n)}+\frac{1}{\sigma_{H}(n)}\frac{R(n,t)}{\sqrt{\lfloor t \rfloor}}\bigg)
\overset{d}{\longrightarrow}N(0,1)~\text{as}~t\to\infty.
\end{align*}
This completes the proof of a).

b)  The proof is similar to the proof of Theorem \ref{thm3} b). Due to our assumptions, we can assume without loss of generality that the random variables   $\widetilde N$, $G_{k}$, $k\in\N$, on the one hand and the  process $T$ on the other hand are defined on different probability spaces $(\Omega_1,\AA_1,P_1)$ and  $(\Omega_2,\AA_2,P_2)$, where  $(\Omega,\AA,P)= (\Omega_1\times\Omega_2,\AA_1\otimes\AA_2,P_1\otimes P_2)$ is endowed with the product measure. We fix an arbitrary $\omega\in\{\lim_{s\to\infty}T(s)=\infty\}\in\AA_2$ and set $t'(t)=T(t)(\omega)$ for $t\in[0,\infty)$; $\lim_{t\to\infty}t'(t)=\infty$ is satisfied. In addition, we fix arbitrary $x\in\R$. Setting
\begin{align*}
h_t(\omega)=P_1\Bigg(\frac{ S'(n,\widetilde N(T(t)(\omega)))-\lambda T(t) \mu_L(n)}{\sqrt{\lambda T(t) (\mu_L(n)^2+\sigma^2_L(n))}}\le x \Bigg)
\end{align*}
it is
\begin{align*}
h_t(\omega)=P_1\Bigg(\frac{ S'(n,\widetilde N(t'(t)))-\lambda t'(t) \mu_L(n)}{\sqrt{\lambda t'(t) (\mu_L(n)^2+\sigma^2_L(n))}}\le x \Bigg).
\end{align*}
Furthermore,  $S'(n,\widetilde N(t'(t)))$ is the total loss at time $t'(t)$ in the situation in part a) with  expectation
$
\E(S'(n,\widetilde N(t'(t))))=\lambda t'(t) \mu_L(n)
$
and  variance
$
\Var(S'(n,\widetilde N(t'(t))))=\lambda t'(t) \big(\mu_L(n)^2+\sigma^2_L(n)\big).
$
Setting $
h(\omega)=\Phi_{0,1}(x)$ (which is constant in $\omega$), part a) implies that $ h_t(\omega)\rightarrow h(\omega)~\text{as}~t\to\infty$. Because $h_t\ge 0$ and $h_t$ is bounded by 1, it follows from the dominated convergence theorem that 
\begin{align*}
&P\Bigg(\frac{ S(n,t)-\lambda T(t) \mu_L(n)}{\sqrt{\lambda T(t) (\mu_L(n)^2+\sigma^2_L(n))}}\le x \Bigg)\\
&=\int h_t(\omega)dP_2(\omega)\\
&=\int h_t(\omega)I\Big(\omega\in\Big\{\lim_{s\to\infty}{T(s)}=\infty\Big\}\Big)dP_2(\omega)\\
&\longrightarrow \int h(\omega)I\Big(\omega\in\Big\{\lim_{s\to\infty}{T(s)}=\infty\Big\}\Big)dP_2(\omega)\\
&=\Phi_{0,1}(x)~\text{as}~t\to\infty,
\end{align*}
observing that $P_2(\{\lim_{s\to\infty}{T(s)}=\infty\})=1$ is satisfied. This completes the proof of part b).
\end{proof}

\begin{proof}[Proof of Theorem \ref{thmrel}.]

For $t\in[0,\infty)$, we consider  the Laplace transform of $(S(n,t),M(n,t))$, i.e,  $\psi_{(S(n,t),M(n,t))}(u,v)=E(e^{-uS(n,t)-vM(n,t)})$ for $(u,v)\in[0,\infty)\times[0,\infty)$, which characterizes the distribution uniquely. With $\eta\in\N_0$, $\iota=(\iota_{i,j,k})_{(i,j,k) \in\N\times \N\times \N}\in\{0,1\}^{\N\times \N\times \N}$, and setting $m=\sum_{i=1}^n\sum_{j=1}^n\sum_{k=1}^{\eta}\iota_{i,j,k}$, we have for $t\in[0,\infty)$ and  $(u,v)\in[0,\infty)\times[0,\infty)$:
\begin{align*}
\E(e^{-uS(n,t)-vM(n,t)}|N(t)=\eta,I=\iota)=\psi_Z(u)^me^{-vm} ,
\end{align*}
thus
\begin{align*}
\psi_{(S(n,t),M(n,t))}(u,v)=\E(\E(e^{-uS(n,t)-vM(n,t)}|N(t),I))=\E(\psi_Z(u)^{M(n,t)}e^{-vM(n,t)})
\end{align*}
on the one hand; on the other hand, we obtain for $m\in\N_0$:
\begin{align*}
\E(e^{-u\sum_{i=1}^{M(n,t)}Z_i-vM(n,t)}|M(n,t)=m)=\psi_Z(u)^me^{-vm} ,
\end{align*}
thus
\begin{align*}
\psi_{\big(\sum_{i=1}^{M(n,t)}Z_i,M(n,t)\big)}(u,v)=\E(\E(e^{-u\sum_{i=1}^{M(n,t)}Z_i-vM(n,t)}|M(n,t)))=\E(\psi_Z(u)^{M(n,t)}e^{-vM(n,t)}).
\end{align*}
This proves the assertion. 
\end{proof}

\end{appendix}

\pagebreak

\noindent
\begin{normalsize}{\bf  Acknowledgement}\end{normalsize}

\noindent
The authors are grateful to Julian Gerstenberg for valuable discussions.

\noindent
This paper reports original research conducted by the authors. AI-based language tools were used solely for stylistic editing and clarity, without affecting the substantive content, analysis, or conclusions.

\noindent
\begin{normalsize}{\bf Competing interests}\end{normalsize}

\noindent
The authors declare that they have no competing interests.

\noindent
\begin{normalsize}{\bf Funding}\end{normalsize}

\noindent
This research received no specific grant from any funding agency in the public, commercial, or not-for-profit sectors.

\noindent
\begin{normalsize}{\bf Data availability}\end{normalsize}

\noindent
This study is based exclusively on simulated data generated for numerical experiments. No external or real-world datasets were used.

\end{document}